\def\BibTeX{{\rm B\kern-.05em{\sc i\kern-.025em b}\kern-.08em
    T\kern-.1667em\lower.7ex\hbox{E}\kern-.125emX}}
\tikzset{state/.style={circle, draw, minimum size=0.5cm}}
\newlength{\continueindent}
\newcommand*{\ALG@customparshape}{\parshape 2 \leftmargin \linewidth \dimexpr\ALG@tlm+\continueindent\relax \dimexpr\linewidth+\leftmargin-\ALG@tlm-\continueindent\relax}
\apptocmd{\ALG@beginblock}{\ALG@customparshape}{}{\errmessage{failed to patch}}
\newtheorem{assumption}{Assumption}
\crefname{appendix}{Appendix}{Appendix}
\crefname{section}{Section}{Sections}
\crefname{subsection}{Section}{Sections}
\crefname{definition}{Definition}{Definitions}
\crefname{proposition}{Proposition}{Propositions}
\crefname{lemma}{Lemma}{Lemmas}
\crefname{theorem}{Theorem}{Theorems}
\crefname{statement}{Statement}{Statements}
\crefname{corollary}{Corollary}{Corollaries}
\crefname{example}{Example}{Examples}
\crefname{figure}{Figure}{Figures}
\crefname{assumption}{Assumption}{Assumptions}
\crefname{remark}{Remark}{Remarks}
\crefname{table}{Table}{Tables}
\crefname{running}{Running Example}{Running Examples}
\crefname{algorithm}{Algorithm}{Algorithms}
\newcommand{\yw}[1]{#1}
\renewcommand{\mid}{\,\vert\,}
\renewcommand{\phi}{\varphi}
\newcommand{\pctl}{{\sf \small PCTL}\xspace}
\newcommand{\pctls}{{\sf \small PCTL$^*$}\xspace}
\newcommand{\hpctls}{{\sf \small HyperPCTL$^*$}\xspace}
\newcommand{\hltl}{{\sf \small HyperLTL}\xspace}
\newcommand{\hpctl}{{\sf \small HyperPCTL}\xspace}
\newcommand{\ltl}{{\sf \small LTL}\xspace}
\newcommand{\paths}{\mathsf{Paths}}
\newcommand{\comp}[1]{\textsf{\small #1}}
\newcommand{\nat}{\mathbb{N}}
\newcommand{\real}{\mathbb{R}}
\renewcommand{\subset}{\subseteq}
\renewcommand{\epsilon}{\varepsilon}
\newcommand{\abs}[1]{\vert #1 \vert}
\newcommand{\nm}[1]{\Vert #1 \Vert}
\newcommand{\bin}{\mathrm{Binom}}
\newcommand{\m}{\mathcal{M}}
\newcommand{\ap}{\mathsf{a}}
\newcommand{\aps}{\mathsf{AP}}
\newcommand{\pv}{\pi}
\newcommand{\pvs}{\underline{\pi}}
\newcommand{\sv}{\sigma}
\newcommand{\pvars}{\mathrm{\Pi}}
\newcommand{\ms}{s}
\newcommand{\mss}{\mathcal{S}}
\newcommand{\mts}{\mathbf{T}}
\newcommand{\msi}{\ms_\mathrm{init}}
\newcommand{\ml}{L}
\newtheorem{remark}{Remark}
\newtheorem{lemma}{Lemma}
\newtheorem{theorem}{Theorem}
\newcommand{\pa}{S}
 \newcommand{\G}{\LTLsquare}
 \newcommand{\F}{\LTLdiamond}
 \newcommand{\X}{\LTLcircle}
\newcommand{\U}{\mathbin{\mathcal{U}}}
\renewcommand{\P}{\mathbin{\mathbb{P}}}
\newcommand{\True}{\ensuremath{\mathtt{true}}}
\newcommand{\pr}{\mathbf{Pr}}
\newcommand{\ex}{\mathbb{E}}
\newcommand{\FP}{\alpha_\mathrm{FP}}
\newcommand{\FN}{\alpha_\mathrm{FN}}
\newcommand{\mle}{p^\mathrm{MLE}}
\newcommand{\T}{\mathcal{T}}
\begin{document}

\title{Statistical Model Checking for Hyperproperties}


\author{
\IEEEauthorblockN{Yu Wang\IEEEauthorrefmark{1}, 
Siddhartha Nalluri\IEEEauthorrefmark{2}, 
Borzoo Bonakdarpour\IEEEauthorrefmark{3},
and Miroslav Pajic\IEEEauthorrefmark{1}}
\IEEEauthorblockA{\IEEEauthorrefmark{1}Department of Electrical \& Computer Engineering,
Duke University, USA\\
Email: \sf{\{yu.wang094, miroslav.pajic\}@duke.edu}}
\IEEEauthorblockA{\IEEEauthorrefmark{2}Department of Computer Science, Duke University, USA\\
Email: \sf{siddhartha.nalluri@duke.edu}}
\IEEEauthorblockA{\IEEEauthorrefmark{3}Department of Computer Science and Engineering, Michigan State University, USA\\
Email: \sf{borzoo@msu.edu}}
}

\maketitle

\begin{abstract}

\emph{Hyperproperties} have shown to be a powerful tool for expressing and reasoning about information-flow security policies. In this paper, we investigate the problem of {\em statistical model checking} (SMC) for hyperproperties. Unlike exhaustive model checking, SMC works based on drawing samples from the system at hand and evaluate the specification with statistical confidence. The main benefit of applying SMC over exhaustive 
techniques is its efficiency and scalability. 
To reason about probabilistic hyperproperties, we first propose the temporal 
logic \hpctls that extends \pctls and \hpctl.
We show that \hpctls can express important probabilistic information-flow
security policies that cannot be expressed with \hpctl.
Then, we introduce SMC algorithms for verifying \hpctls formulas on 
discrete-time Markov chains, based on sequential probability ratio tests (SPRT) 
with a new notion of multi-dimensional indifference region.
Our SMC algorithms can handle both non-nested and nested
probability operators for any desired significance level.
%
To show the effectiveness of our technique, 
we evaluate our SMC algorithms on four case studies focused on information security: timing side-channel vulnerability in 
encryption, probabilistic anonymity in dining cryptographers, probabilistic 
noninterference of parallel programs, and the performance of a randomized cache 
replacement policy that acts as a countermeasure against cache flush attacks.

\end{abstract}

\section{Introduction} \label{sec:intro}


{\em Randomization} has been a powerful tool in the design and development of 
many algorithms and protocols that make probabilistic guarantees in the area of 
information security. Prominent examples such as {\em quantitative information 
flow}~\cite{kb07,s09}, {\em probabilistic noninterference}~\cite{g92}, and 
{\em differential privacy}~\cite{dr14} quantify the amount of information 
leakage and the relation between 
\yw{two probabilistic execution traces} of a system. These 
and similar requirements constitute {\em probabilistic 
hyperproperties}~\cite{cs10,ab18}. They extend traditional trace properties from 
sets of execution traces to sets of execution traces and allow for 
explicit and simultaneous quantification over the temporal behavior of multiple 
execution traces. Probabilistic hyperproperties stipulate the probability 
relation between independent executions.

{\em Model checking}, an automated technique that verifies the correctness of 
a system with respect to a formal specification, has arguably been the most 
successful story of using formal methods in the past three decades. Since many 
systems have stochastic nature (e.g., randomized distributed algorithms), model 
checking of such systems has been an active area of research. Temporal logics such as \pctls~\cite{baier_PrinciplesModelChecking_2008} as well as model checkers \comp{PRISM}~\cite{knp11} and \comp{STORM}~\cite{DehnertJK017} 
have been developed as formalism and tools to express and reason about probabilistic 
systems. However, these techniques are unable to capture and verify 
probabilistic hyperproperties that are vital to reason about quantified 
information-flow security.

The state of the art in specification and verification of probabilistic 
hyperproperties is limited to the temporal logic \hpctl~\cite{ab18}. The model 
checking algorithm for \hpctl utilizes a numerical approach that iteratively 
computes the exact measure of paths satisfying relevant sub-formulas. In this 
context, we currently face two significant and orthogonal gaps to apply 
verification of probabilistic hyperproperties in~practice:

\begin{itemize}

\item {\bf Expressiveness.} \ First, \hpctl does {\em not} allow (1)~nesting of 
temporal operators, which is necessary to express requirements such as performance guarantees in randomized cache replacement protocols that defend 
against cache-flush attacks, and (2)~explicit quantification over execution paths, which is necessary to reason about the probability of reaching certain~states.  

\item {\bf Scalability.} \ Second, and perhaps more importantly, numerical algorithms for probabilistic model checking, including the one proposed in~\cite{ab18}, tend to require substantial time and space, and often run into 
serious scalability issues. Indeed, these algorithms work only for small systems 
that have certain structural properties. On top of this difficulty, another major challenge in verifying hyperproperties is that the computation 
complexity for exhaustive verification grows at least exponentially in the 
number of quantifiers of the input 
formula~\cite{ab18,ab16,clarkson_TemporalLogicsHyperproperties_2014,bf18}. 

\end{itemize}

In this work, our goal is to address the above stumbling blocks 
(expressiveness and scalability) by investigating {\em statistical model 
checking} (SMC) 
\cite{agha_SurveyStatisticalModel_2018,legay_StatisticalModelChecking_2010,larsen_StatisticalModelChecking_2016}
for hyperproperties with probabilistic guarantees. 
To this end, we first introduce on discrete-time Markov chains 
the temporal logic \hpctls that extends \pctls~\cite{baier_PrinciplesModelChecking_2008} by 
(i) allowing explicit quantification over paths,
and \hpctl~\cite{ab18} 
by (ii) allowing nested probability and temporal operators. 
These two features are crucial in expressing probabilistic hyperproperties,
such as probabilistic noninterference.
Specifically, consider a probabilistic program with a high-security input $h 
\in \{0,1\}$ and a low-security output $l \in \{0,1\}$.
Probabilistic noninterference requires that the probability of observing the low-security output $l=0$ (or $l=1$) should be equal for two executions $\pi_{h=0}$ and $\pi_{h=1}$
that have the high-security input 
$h=0$ and $h=1$, respectively. In other words, the high-security input cannot be 
inferred from the low-security output through a probabilistic channel -- i.e.,
%
$$	
\P^{\pi_{h=0}}( \pi_{h=0} \text{ outputs } l=0 ) 
= \P^{\pi_{h=1}}( \pi_{h=1} \text{ outputs } l=0 )
$$
This property involves the relation between two 
executions $\pi_{h=0}$ and $\pi_{h=1}$,
and cannot be expressed by non-hyper logics, such as \pctls.
We also illustrate that \hpctls can elegantly express properties such as 
generalized probabilistic causation, countermeasures for side-channel attacks, 
probabilistic noninterference, and probabilistic independence among executions.
In addition, the latter is an important performance property for cache 
replacement policies that defend against cache flush attacks and cannot be 
expressed in \hpctl, as it requires using nested temporal operators.

To tackle the scalability problem, we turn to SMC -- a popular 
approach in dealing with probabilistic systems that uses a {\em sample-based} 
technique, where one asserts whether the system satisfies a property by 
observing some of its executions
\cite{legay_StatisticalModelChecking_2010,zuliani_StatisticalModelChecking_2015,
younes_YmerStatisticalModel_2005,roohi_StatisticalVerificationToyota_2017,zarei2020statistical}. 
The general idea of SMC is to treat the problem of checking a temporal logic 
formula on a probabilistic system as {\em hypothesis 
testing}~\cite{sen_StatisticalModelChecking_2005a,
agha_SurveyStatisticalModel_2018}. By drawing samples from the underlying 
probabilistic system, the satisfaction of the formula can be inferred
with high confidence levels. To the best of our knowledge, the work on SMC for 
hyperproperties is limited to~\cite{wzbp19}, where the authors propose 
an SMC algorithm for hyperproperties for cyber-physical 
systems using the {\em Clopper-Pearson} (CP) confidence 
intervals.
\yw{In this work, we propose another SMC algorithm
for hyperproperties using 
{\em sequential probability 
ratio tests} (SPRT)
\cite{wald_SequentialTestsStatistical_1945},
which are more efficient for statistical 
inference than using the confidence 
intervals.}

Developing SMC for \hpctls formulas 
using SPRT has significant challenges 
that do not appear in SMC
for non-hyper probabilistic temporal logics, such as
\pctls.
This is caused by the fact that in \hpctls, one can express complex probabilistic quantification among different paths.
Specifically, \hpctls allows~for:

\begin{itemize}

\item {\bf Probabilistic quantification of multiple paths.} For 
example, formula
\begin{equation} \label{eq:ex1}
\P^{(\pv_1, \pv_2)} (\ap^{\pv_1} \U \ap^{\pv_2}) > p
\end{equation}
means that the probability that an atomic proposition $\ap$ holds on a 
random path $\pv_1$ until it becomes true on another random path $\pv_2$ is 
greater than some $p \in [0,1]$.
    
\item {\bf Arithmetics of probabilistic quantification.} For example, formula
\begin{equation} \label{eq:ex2}
\P^{\pv_1} (\F \ap^{\pv_1}) + \P^{\pv_2} (\G \ap^{\pv_2}) > p  
\end{equation} 
stipulates that the sum of the probability that $\ap$ finally holds 
and the probability that $\ap$ always holds,
is greater than some $p \geq 0$.

\item {\bf Nested probabilistic quantification.} This is different 
from nested probabilistic quantification in \pctls. For example, formula
\begin{equation} \label{eq:ex3}
\P^{\pv_1} \big(\P^{\pv_2} (\ap^{\pv_1} \U \ap^{\pv_2}) > p_1 \big)  > 
p_2,
\end{equation} 
requires that for a (given) path $\pv_1$, the probability that 
$(\ap^{\pv_1} \U \ap^{\pv_2})$ holds for a random path $\pv_2$, is greater than 
some $p_1 \in [0,1]$; and, this fact should hold with probability greater than 
some $p_2 \in [0,1]$ for a random path $\pv_1$.
\end{itemize}
The different kinds of 
complex probabilistic quantification among multiple paths
cannot be handled by existing SMC algorithms 
for non-hyper probabilistic temporal logics~\cite{agha_SurveyStatisticalModel_2018}.

To use SPRT to handle the aforementioned challenges of SMC 
requires a condition on the {\em indifference regions}.
As a simple example, to statistically infer if $\pr (A) > p$, for some random event $A$, 
using SPRT from sampling, it is required that the probability $\pr(A)$ should 
not be too ``close'' to $p$; this means that there exists some known $\varepsilon > 0$ 
such that 
$\pr(A) \notin (p - \varepsilon, p + \varepsilon)$, i.e., $\pr(A) \geq p + \varepsilon$ or $\pr(A) \leq p - \varepsilon$.
This is a common assumption used 
for many SMC techniques~\cite{sen_StatisticalModelChecking_2005a,
agha_SurveyStatisticalModel_2018}.
Therefore, it is sufficient to test between the two most indistinguishable cases
$\pr(A) \notin p - \varepsilon$
and
$\pr(A) \notin p + \varepsilon$.
The interval $(p - \varepsilon, p + \varepsilon)$ 
is usually referred to as the \emph{indifference~region}.
\yw{In this work, we propose new conditions 
on the {\em indifference regions} 
that enable the use of SPRT in the SMC of \hpctls.}
%

For the SMC of arithmetics of 
probabilistic quantifications in~\eqref{eq:ex2}, 
we consider the hypothesis testing problem:
\begin{equation} \label{eq:into_ht}
	\begin{split}
		& H_0: \big( \P^{\pv_1} (\F \ap^{\pv_1}), \P^{\pv_2} (\G 
\ap^{\pv_2}) \big) \in D,
		\\ & H_1: \big( \P^{\pv_1} (\F \ap^{\pv_1}), \P^{\pv_2} (\G 
\ap^{\pv_2}) \big) \in D^\mathrm{c},
	\end{split}
\end{equation}
where $D = \{(p_1, p_2) \in [0,1]^2 \mid p_1 + p_2 > p \}$ 
and $D^\mathrm{c}$ is its complement set.
To handle the joint probability $\big( \P^{\pv_1} (\F \ap^{\pv_1}), 
\P^{\pv_2} (\G \ap^{\pv_2}) \big)$ in \eqref{eq:into_ht},
we propose a novel {\em multi-dimensional} extension of the standard SPRT. 
%
Specifically, we first generalize the notion of the indifference 
region (namely, the parameter $\varepsilon$) to a multi-dimensional case.
This new notion of indifference region ensures that 
our multi-dimensional SPRT algorithm provides provable probabilistic guarantees
for any desired false positive $\FP \in (0,1)$ and false negative $\FN \in 
(0,1)$ ratios.
\yw{Then we note that the hypotheses $H_0$ and $H_1$ 
in \eqref{eq:into_ht} are composite,
which contains infinitely many simple hypotheses.
To use SPRT, which mainly deal with simple hypotheses, 
on the two composite hypotheses, we propose a geometric condition 
to identify the two most indistinguishable simple hypotheses from
$H_0$ and $H_1$, respectively.
We show that if the SPRT can distinguish these two simple hypotheses,
then any two simple hypotheses from $H_0$ and $H_1$ can be distinguished
by the same test.}

For the SMC of probabilistic quantification of multiple paths
in~\eqref{eq:ex1},
we note that the SMC of 
probabilistic quantification of multiple parallel paths can be handled by 
generalizing the common SPRT to tuples of samples. 
For the SMC of nested probabilistic quantification
in~\eqref{eq:ex3},
we can perform a compositional analysis 
for the probabilistic error in the SMC of the sub-formulas, to yield the global 
false positive and false negative ratios, in the same way as~\cite{wzbp19}.
%

Finally, based on the above new statistical inference algorithms,
we design SMC algorithms for \hpctls.
These algorithms are fully implemented and evaluated by four prominent case studies.%
\footnote{The simulation code is available 
at~\cite{gitlab_hpctls}.}
Specifically, we apply our SMC algorithms to analyze:
(i)~the time side-channel vulnerability in encryption~\cite{chen2017precise,tizpaz2018data,GabFeed}, (ii)~probabilistic anonymity in dining cryptographers~\cite{chaum_DiningCryptographersProblem_1988}, (iii)~probabilistic noninterference of parallel programs~\cite{goguen1982security},
and (iv)~the performance of a random cache replacement 
policy~\cite{canones2017security} that defends against cache flush attacks.
Our results show that the proposed SMC algorithms 
provide the correct answer with high confidence levels 
in all cases while requiring very short analysis times.

\paragraph{Organization} The rest of the paper is organized as follows.
We introduce \hpctls 
in~\cref{sec:hpctls}.
The expressiveness of \hpctls is 
discussed in \cref{sec:express}, before illustrating its application 
in~\cref{sec:applications}. Our SMC algorithms~for \hpctls are introduced 
in~\cref{sec:non-nested}. We present our case studies and experimental 
results in~\cref{sec:simulation}. Related work is discussed 
in~\cref{sec:related}, before concluding remarks 
in~Section~\ref{sec:conclusion}.


\section{The Temporal Logic HyperPCTL$^*$}
\label{sec:hpctls}

We begin with some notation. We denote the set of natural and real 
numbers 
by $\nat$ and $\real$, respectively.
Let $\nat_\infty = \nat \, \cup \, \{\infty\}$.
For $n \in \nat$, let $[n] = \{1,\dots,n\}$.
%
%
The cardinality of a set is denoted by $\abs{\cdot}$.
For $n \in \nat$, we use $\underline{s} = (s_1, \ldots, s_n)$
to denote a {\em tuple}. We use $\pa = s(0) s(1) \dots$ to denote a {\em sequence}, and 
the $i$-suffix of the sequence is denoted by $\pa^{(i)} = s(i) s(i+1)\cdots$.
For any set $D \subset \real^n$, we denote its {\em boundary}, {\em interior}, {\em closure}
and {\em complement} by $\partial D$, $D^\circ$, $\overline{D}$, and $D^\mathrm{c}$, respectively.

Our proposed temporal logic \hpctls is an extension of \pctls~\cite{baier_PrinciplesModelChecking_2008} that enables 
handling hyperproperties.
It also can be viewed as a variation of \hpctl~\cite{ab18} that 
allows for nested temporal and probability operators.
In this section, we introduce the formal syntax and semantics of
\hpctls; its relation with
\pctls, \hltl and \hpctl is discussed in the next section.

\subsection{Syntax} \label{sub:syntax}

\hpctls formulas are defined by the grammar
\begin{align}
    & \varphi \Coloneqq \;
    \ap^\pv 
    \ \vert \ \varphi^\pv 
    \ \vert \ \neg \varphi
    \ \vert \ \varphi \land \varphi
    \ \vert \ \X \varphi
    \ \vert \ \varphi \U^{\leq k} \varphi
    \ \vert \ \rho \Join \rho
    \\ & \rho \Coloneqq  \ f(\rho, \dots, \rho)
     \ \vert \ \P^{\pvs} (\varphi)
     \ \vert \ \P^{\pvs} (\rho) \label{eq:hpctls_2}
\end{align}
where

\begin{itemize}
\item $\ap \in \aps$ is an atomic proposition;

\item $\pv$ is a (fresh) {\em random path 
variable} from an infinite supply of such variables $\pvars$;%
\footnote{Technically, using a non-fresh path variable can be allowed.
However, to avoid possible confusion 
about the meaning of the \hpctls formulas, 
we only use fresh path variables here.}


\item $\X$ and $\U^{\leq k}$ are the `next' and `until' operators, 
respectively, where $k \in \nat_\infty$ is the time bound and $\U^{\leq \infty}$ 
means \emph{``unbounded until}";

\item $\Join \ \in \{<,>,=,\leq,\geq\}$, which allows comparing probabilities among different random paths;


\item $\P^{\pvs}$ is the probability operator for a tuple of random path variables $\pvs = (\pv_1, \ldots, \pv_n)$ for some $n \in \nat$, and

\item $f: \real^n \rightarrow \real$ is an $n$-ary elementary function,%
\footnote{Elementary functions are defined as a sum, product, and/or composition of finitely many polynomials, rational functions, trigonometric and exponential functions, and their inverse functions.}
with constants being viewed as a $0$-ary function. This enables expressing arithmetic operations and entropy from probabilities.
\end{itemize}

\hpctls can be viewed as a probabilistic adaptation of 
\hltl~\cite{clarkson_TemporalLogicsHyperproperties_2014}.
Following the terminology of \hltl 
(and more generally, 
the first-order logic~\cite{enderton2001mathematical}), 
in a given \hpctls formula, we call a path variable \emph{free} 
if it has not been associated by a probability operator; 
otherwise, the path variable is \emph{quantified}.
%
%
For example, in a \hpctls formula 
$\P^{\pv_1} (\ap^{\pv_1} \U \ap^{\pv_2})$,
the path variable $\pv_1$ is quantified
and the path variable $\pv_2$ is free.
Mostly, we are interested in \hpctls formulas
with all the path variables quantified.

Additional logic operators are derived as usual: $\True \equiv 
\ap^{\pv} \vee \neg \ap^{\pv}$,
$\varphi \lor \varphi' \equiv \neg (\neg \varphi \land \neg \varphi')$,
$\varphi \Rightarrow \varphi' \equiv \neg \varphi \lor \varphi'$, $\F^{\leq k} 
\varphi
\equiv \True \U^{\leq k} \varphi$, 
and $\G^{\leq k} \varphi \equiv \neg \F^{\leq k} \neg \varphi$. 
We denote $\U^{\leq \infty}$, $\F^{\leq \infty}$, and $\G^{\leq 
\infty}$ by $\U$, $\F$, and $\G$, respectively.
We represent a $1$-tuple by its element,
i.e., $\sv^{(\pv)}$ and $\P^{(\pv)}$ are written as $\sv^\pv$ and $\P^\pv$,
respectively.

\subsection{Semantics}
\label{sub:semantics}

We consider the semantics of \hpctls on discrete-time Markov chains (DTMCs) with their states labeled by a set of atomic propositions $\aps$.
Formally, a DTMC is a tuple $\m = (\mss, \msi, \mts, \aps, \ml)$ where 

\begin{itemize}
\item $\mss$ is the finite set of {\em states}, and $\msi$ the {\em initial 
state};

\item $\mts: \mss \times \mss \rightarrow [0, 1]$ is the {\em transition
probability function}, where for any state $\ms \in \mss$, it holds that
$$
\sum_{\ms' \in \mss} \mts(\ms, \ms') = 1;
$$

\item $\aps$ is the set of {\em atomic propositions}, and

\item $\ml : \mss \rightarrow 2^\aps$ is a {\em labeling function}. 
\end{itemize}
An example DTMC labeled by the atomic propositions $\{\ap_1, \allowbreak \ap_2\}$ is 
illustrated in \cref{fig:thm2}.
A {\em path} of a DTMC $\m = (\mss, \msi, \mts, \aps, \ml)$ is of the form $\pa 
= \ms(0) \ms(1) \cdots$, such that for every $i \in \nat$, 
%
%
%
%
%
$\ms(i) \in \mss$ and $\mts(\ms(i),\ms({i+1})) \neq 0$. 
By $\paths(\ms)$, we denote the set of paths that start from state $\ms$, while 
$\paths(\m)$ denotes the set of all paths of DTMC~$\m$.

The semantics of \hpctls formulas is described in terms of the interpretation
tuple $(\m, V)$, where

\begin{itemize}

\item $\m = (\mss, \msi, \mts, \aps, \ml)$ is a DTMC, and

\item $V: \pvars \rightarrow \paths(\m)$ is a \emph{path assignment}, mapping 
each (random) path variable to a concrete path of $\m$, starting from the 
initial state $\msi$ by default.
\end{itemize}
We denote by $\llbracket \cdot \rrbracket_{V}$ the 
instantiation of the assignments $V$ on a \hpctls formula.
The judgment rules for semantics of a \hpctls formula $\varphi$ are detailed in~\cref{fig:semantics}, where

\begin{figure*}[!t]
\centering
\fbox{
\begin{minipage}{.956\textwidth}

\[
\begin{array}{l@{\hspace{2.2em}}c@{\hspace{2.2em}}l}
(\m, V) \models \ap^\pv & \textrm{iff} & \ap \in \ml \big( V(\pv) (0) \big)\\
(\m, V) \models \varphi^\pv & \textrm{iff} & (\m, V[\pv' \mapsto V(\pv) 
]) \models \varphi
\\
(\m, V) \models \neg \varphi & \textrm{iff} & (\m, V) \not\models \varphi
\\
(\m, V) \models \varphi_1 \land \varphi_2 &  \textrm{iff} & (\m, V) 
\models
\varphi_1 \textrm{ and } (\m, V) \models \varphi_2
\\
(\m, V) \models \X \varphi &  \textrm{iff} & (\m, V^{(1)}) \models \varphi
\\
(\m, V) \models \varphi_1  \U^{\leq k}  \varphi_2  & \textrm{iff} &
\textrm{there exists } i \leq k \textrm{ such that } \big((\m, V^{(i)}) \models 
\varphi_2 \big) \, \land \,
 \big(\text{for all } j < i.\ (\m, V^{(j)}) \models \varphi_1 \big)\\
(\m, V) \models \rho \Join \rho & \textrm{iff} &
(\m, V) \models \llbracket \rho \rrbracket_{V} \Join \llbracket \rho 
\rrbracket_{V}
\\
 \llbracket f(\rho, \ldots, \rho) \rrbracket_{V} & \textrm{=} & 
 f \big(\llbracket \rho \rrbracket_{V}, \ldots, \llbracket \rho 
\rrbracket_{V} \big)
\\
 \llbracket \P^{(\pv_1, \ldots, \pv_n)} (\varphi) \rrbracket_{V} & 
\textrm{=} &
\pr\Big\{ \big( \pa_i \in \paths(V (\pi_i) (0) \big)_{i \in [n]} :  (\m, 
V[\pv_i \mapsto \pa_i \textrm{ for all } i \in 
[n]]) \models \varphi \Big\}
\\
 \llbracket \P^{(\pv_1, \ldots, \pv_n)} (\rho) \rrbracket_{V} & \textrm{=} 
&
\pr \Big\{ \big( \pa_i \in \paths(V (\pi_i) (0) \big)_{i \in [n]} : 
 (\m, V[\pv_i \mapsto \pa_i \textrm{ for all } i \in [n]]) 
\models \rho \Big\} 
\end{array}
\]

\end{minipage}
}
\caption{Semantics of \hpctls.}
\label{fig:semantics}
\end{figure*}

\begin{enumerate}
  \item $V[\cdot]$ denotes the revision of the assignment 
  $V$ by the rules given in $[\cdot]$.

  \item $V^{(i)}$ is the $i$-shift of path assignment $V$, defined by  $V^{(i)}(\pv) = (V(\pv))^{(i)}$.
  
  \item By the second rule, associating the path variable 
$\pv$ to the formula $\varphi$ 
assigns the value of all path variables in $\varphi$ to $V(\pv)$.
For a given $(\m, V)$, 
the satisfaction of $\varphi$ is preserved, if the 
free path variables in $\varphi$ are replaced by $\pv$.
(The quantified path variables are unaffected, as discussed in Point 4) below.)

For example, the following formulas are 
\emph{semantically} equivalent
-- i.e., the truth value of the formulas on both sides are identical
for any given $(\m, V)$,
\[
\begin{split}
  (\ap^{\pv_1})^{\pv_2} & \equiv \ap^{\pv_2},
\\  (\X \ap^{\pv_1})^{\pv_2} & \equiv \X \ap^{\pv_2},
\\  (\ap^{\pv_1} \U \ap^{\pv_2})^{\pv_3} & \equiv 
  \ap^{\pv_3} \U \ap^{\pv_3}.
\end{split}  
\]
In particular, in the first above equivalence, $\pv_2$ in  
$(\ap^{\pv_1})^{\pv_2}$ can replace $\pi_1$, since $\pv_1$ is a free random 
path variable and obtain $\ap^{\pv_2}$. However, these two formulas would not 
be equivalent if $\pv_1$ was not free.

  \item In the last two rules, the probability $\pr$ is taken 
for an $n$-tuple of sample paths $(\pa_1, \ldots, \pa_n)$ to instantiate 
$\pvs$, and ``$:$'' means `such that'.
The evaluation of the probability operator $\llbracket \P^{(\pv_1, \ldots, \pv_n)} (\varphi) \rrbracket_{V}$
means to (re)draw the random path variables $\pv_1, \ldots, \pv_n$ from their current initial states on the DTMC $\m$
(regardless of their current assignment by $V$), 
and evaluate the satisfaction probability of $\varphi$.
Thus, following Point 3), the quantified path variables 
are unaffected by the association of new path variables
and we have the following \emph{semantic} equivalence:
\[
\begin{split}
  \big( \P^{\pv_1} (\ap^{\pv_1} \U \ap^{\pv_2}) 
  \big)^{\pv_3} & \equiv 
  \P^{\pv_1} (\ap^{\pv_1} \U \ap^{\pv_3}),
  \\
  \big( \P^{(\pv_1, \pv_2)} (\ap^{\pv_1} \U \ap^{\pv_2}) 
  \big)^{\pv_3} & \equiv 
  \P^{(\pv_1, \pv_2)} (\ap^{\pv_1} \U \ap^{\pv_2}).
\end{split}  
\]
\end{enumerate}
In particular, in the first equivalence, $\pv_3$ in $(\P^{\pv_1} 
(\ap^{\pv_1} \U \ap^{\pv_2}))^{\pv_3}$ can replace $\pv_2$, since $\pv_2$ is 
free, obtaining $\P^{\pv_1} (\ap^{\pv_1} \U \ap^{\pv_3})$. However, $\pv_3$ 
cannot replace $\pv_1$, as $\pv_1$ is quantified by the probability operator.

%

\subsection{Discussion on HyperPCTL$^*$}

Consider the DTMC $\m$ in \cref{fig:thm2}, and the following \hpctls formula:
\begin{equation*} \label{eq:ex_1}
\varphi = \P^{(\pv_1, \pv_2)} \Big((\ap_1^{\pv_1} \land \ap_1^{\pv_2}) \; 
\land \; \F (\ap_2^{\pv_1} \land \ap_2^{\pv_2}) \Big) > 1/6.
\end{equation*}
The formula claims that two (independently) random paths $\pv_1$ and 
$\pv_2$ from $\msi = s_0$ satisfy $(\ap_1^{\pv_1} \land \ap_1^{\pv_2}) \; \land 
\; \F (\ap_2^{\pv_1} \land \ap_2^{\pv_2})$, 
i.e., both paths should satisfy $\ap_1$ in their initial state 
and satisfy $\ap_2$ 
(later) at the same time
with probability greater than $1/6$.
\yw{By calculation from \cref{fig:thm2}, 
this probability is $1/4$, so we have $\m \models \varphi$.}

\begin{figure}[t]
\centering
\begin{tikzpicture}
  \node[state, initial, initial text=, label={$\{\ap_1\}$}] (0) 
{$\ms_0$};
  \node[state,right of=0, label={$\{\}$},xshift=.5cm] (1) {$\ms_1$};
  \node[state,right of=1, label={$\{\}$},xshift=.5cm, yshift=1cm] (2) {$\ms_2$};
  \node[state,right of=1, label={$\{\ap_2\}$},xshift=.5cm, yshift=-1cm] (3)
{$\ms_3$};

  \path[->] (0) edge node[above] {$1$} (1)
  (1) edge node[above] {$\frac{1}{2}$} (2)
  (1) edge node[below] {$\frac{1}{2}$} (3)
  (2) edge[loop right] node[right] {$1$} ()
  (3) edge[loop right] node[right] {$1$} ();
\end{tikzpicture}
\caption{\hpctls example on DTMC~$\m$.}
\label{fig:thm2}
\end{figure}

\hpctls can generate complex nested formulas.
We explain this using two formulas. First consider the formula:
\begin{equation}
\label{eq:ex4}
  \P^{\pv_1} \Big( \F \big( \P^{(\pv_2, \pv_3)} \big( \ap^{\pv_2} \U (\ap^{\pv_3})^{\pv_1} \big) > c_2 \big) 
\Big) > c_1.
\end{equation}
The formula \eqref{eq:ex4} states that with probability greater than $c_1$, we 
can find a path $\pv_1$, such that finally from some state $\ms$ on $\pv_1$, 
with probability greater than $c_2$, we can find a pair of paths 
$(\pv_2,\pv_3)$ from the pair of states $(\msi, \ms)$ to satisfy 
``$\ap^{\pv_2}$ until $\ap^{\pv_3}$''. 
That is, the computation tree of $\pv_3$ is a subtree of the computation tree 
of $\pv_1$ (rooted at $\msi$), since $\pi_3$ in $(\ap^{\pi_3})^{\pv_1}$ is in 
the scope of $\pv_1$. On the other hand, since $\pv_2$ is indexed by 
$\pv_1$, its computation tree is rooted at $\msi$ (see \cref{fig:ex4}).
The inner subformula $\P^{(\pv_2, \pv_3)} \big( \ap^{\pv_2} \U 
(\ap^{\pv_3})^{\pv_1} \big) > c_2$ 
in \eqref{eq:ex4} involves the probabilistic computation trees of 
$\pv_2$ and $\pv_3$, as shown by the dotted box in \cref{fig:ex4}.

\begin{figure}[t]
  \centering
  \begin{tikzpicture}
    \draw[fill] (0, 0) ellipse (2pt and 2pt);
    \draw[->, >=stealth] (0, 0) node[left] {$\msi$} -- (1, 0);
    \draw[dashed, ->, >=stealth] (0, 0) -- node[above] 
    {$\pv_1$} (0.6, 0.3);
    \draw[dashed, ->, >=stealth] (0, 0) -- (0.6, -0.3);
    \draw[fill] (1, 0) ellipse (2pt and 2pt);
    \draw[->, >=stealth] (1, 0) node[below] {$\ms$} -- (2, 0);
    \draw[dashed, ->, >=stealth] (1, 0) -- node[above] {$\pv_3$} (1.6, 0.3);
    \draw[dashed, ->, >=stealth] (1, 0) -- (1.6, -0.3);
    \draw[fill] (1, -1) ellipse (2pt and 2pt);
    \draw[->, >=stealth] (1, -1) node[below] {$\msi$} -- (2, -1);
    \draw[dashed, ->, >=stealth] (1, -1) -- node[above] {$\pv_2$} (1.6, -0.7);
    \draw[dashed, ->, >=stealth] (1, -1) -- (1.6, -1.3);
    \draw[dotted] (0.7, 0.7) rectangle (2.2, -1.7);
  \end{tikzpicture}

  \caption{Computation trees for~\eqref{eq:ex4}. The dashed arrows show 
other possible sample values of the path variables to illustrate the 
probabilistic computation~tree. }
  \label{fig:ex4}
\end{figure}
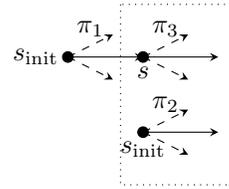

Now, consider the formula:
\begin{equation}
\label{eq:ex5}
  \P^{\pv_1} \Big( \F \big( \P^{(\pv_2, \pv_3)} \big( \ap^{\pv_2} \U \ap^{\pv_3} 
\big) > c_2 \big)^{\pv_1} 
\Big) > c_1.
\end{equation}
It requires that with probability greater than $c_1$, we 
can find a path $\pv_1$, such that finally from some state $\ms$ on $\pv_1$, 
with probability greater than $c_2$, we can find a pair of paths $(\pv_2, 
\pv_3)$ from the state $\ms$ that satisfy ``$\ap^{\pv_2}$ until $\ap^{\pv_3}$''.
That is, the computation tree of $\pv_2$ and $\pv_3$ (rooted at $\ms$) is a 
subtree of the computation tree of $\pv_1$, rooted at $\msi$ (see 
\cref{fig:ex5}).
Again, the inner subformula 
$\P^{(\pv_2, \pv_3)} \big( \ap^{\pv_2} \U \ap^{\pv_3} 
\big) > c_2$ 
in \eqref{eq:ex5} involves the probabilistic computation trees of 
$\pv_2$ and $\pv_3$, as shown by the dotted box in \cref{fig:ex5}.

\begin{figure}[t]
  \centering
  \begin{tikzpicture}
    \draw[fill] (0, 0) ellipse (2pt and 2pt);
    \draw[->, >=stealth] (0, 0) node[left] {$\msi$} -- (1, 0);
    \draw[dashed, ->, >=stealth] (0, 0) -- node[above] 
    {$\pv_1$} (0.6, 0.3);
    \draw[dashed, ->, >=stealth] (0, 0) -- (0.6, -0.3);
    \draw[fill] (1, 0) ellipse (2pt and 2pt);
(2, 0);
    \draw[->, >=stealth] (1, 0) -- (1.6, 0.6) node[right] {$\pv_2$};
    \draw[dashed, ->, >=stealth] (1, 0) -- (1.6, 0.3);
    \draw[dashed, ->, >=stealth] (1, 0) -- (1.3, 0.6);
    \draw[->, >=stealth] (1, 0) -- (1.6, -0.6) node[right] {$\pv_3$};
    \draw[dashed, ->, >=stealth] (1, 0) -- (1.6, -0.3);
    \draw[dashed, ->, >=stealth] (1, 0) -- (1.3, -0.6);
    \draw[dotted] (0.7, 1) rectangle (2.2, -1);
  \end{tikzpicture}

  \caption{Computation trees for~\eqref{eq:ex5}. The dashed arrows show 
other possible sample values of the path variables to illustrate the 
probabilistic computation~tree.}
  \label{fig:ex5}
\end{figure}
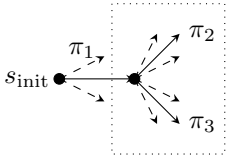


\section{Applications of HyperPCTL$^*$}
\label{sec:applications}

In this section, we illustrate the application of \hpctls by four examples related to information-flow security, ranging from timing attacks, scheduling of parallel programs, communication protocols, and computer hardware.
\yw{These examples cannot be properly handled by existing temporal logics.}

\subsection{Side-channel Vulnerability}
\label{sub:sc_vul}

Timing side-channel attacks are possible if an attacker can infer the secret 
values, \yw{which are set at the second step of an execution}, by observing the execution time of a program. 
To prevent such attacks, it is required that the 
probability of termination within some $k \in \nat$ steps should be 
approximately equal for
two (random) executions $\pv_1$ and $\pv_2$, 
where the secret values are 
$\mathtt{S}_1$ and $\mathtt{S}_2$, respectively:
\begin{equation} \label{eq:sc_vul}
\begin{split}
	& \P^{\pv_1} \big( (\X \mathtt{S}_1^{\pv_1}) \Rightarrow 
(\F^{\leq k} \mathtt{F}^{\pv_1}) \big) 
	\\ & \qquad \approx_\varepsilon 
	\P^{\pv_2} \big( (\X \mathtt{S}_2^{\pv_2}) \Rightarrow 
(\F^{\leq k} \mathtt{F}^{\pv_2}) \big),
\end{split}
\end{equation}
where the label $\mathtt{F}$ represents the end of execution, the 
next operator signifies that the secret is established in the first step of 
execution, and
$\approx_\varepsilon$ stands for approximately equal within some $\varepsilon > 
0$.
If~\eqref{eq:sc_vul} holds, then an attacker cannot infer the the secret values 
from whether the program terminates in $k$~steps. 

\subsection{Probabilistic Noninterference}
\label{sub:noninterference}

Probabilistic noninterference~\cite{g92} establishes the connection between 
information theory and information flow by employing probabilities to address 
covert channels. Intuitively, it requires that the probability of every 
low-observable trace pattern is the same for every low-equivalent initial state. 
For example, consider the parallel composition of the following $n$-threads:
\begin{equation}
\begin{split}
\mathrm{Th}_k: \texttt{ for }  i_k = 1 & \texttt{ to } (h+1) \times k 
\\ & \texttt{ do }
\{\ldots; l \gets (k \textrm{ mod } 2)\},
\end{split}
\end{equation}
where $k \in [n]$
and $l \in \{0,1\}$ is a publicly observable output.
The secret input $h$ is randomly set to $0$ or $1$ with probability $0.5$.
At each step, the processor randomly chooses one thread among the unfinished threads with equal probability and executes \emph{one iteration} of the for-loop (including the assignment of $l$), until all the $n$ threads are finished.
Clearly, the (random) execution of this $n$-thread program
can be represented by a DTMC, where the states are labeled by the values of all the variables.
Starting from the initial state, it sets the value of $h$ at the second step
and then executes the threads until finished.
The termination states are labeled by $\mathtt{F}$.

%

As the threads have different numbers of loops depending on $h$ and the scheduling is uniformly random, the whole process is more likely to terminate at a thread with more loops, whose thread number is partially indicated by $l$.
This opens up the possibility that by observing $l$, an attacker can infer the difference in the number of loops among the threads, and hence infer $h$.
On the other hand, the attack cannot happen if the probability of observing 
$\mathtt{L_0}: l=0$ (or $\mathtt{L_1}: l=1$) is approximately equal,
regardless of $\mathtt{H_0}: h=0$ or $\mathtt{H_1}: h=1$
-- i.e., the value of $h$ cannot be inferred from the value of $l$.
This is formally defined in \hpctls by:
\begin{equation} \label{eq:def_pn}
\begin{split}
& \P^{\pv_1} \Big( (\X \mathtt{H_0}^{\pv_1}) \Rightarrow 
\big( \F ( \mathtt{F}^{\pv_1} \land \mathtt{L_0}^{\pv_1}) \big) \Big) 
\\ & \quad \approx_\varepsilon \P^{\pv_2} \Big( (\X \mathtt{H_1}^{\pv_2}) \Rightarrow 
\big( \F ( \mathtt{F}^{\pv_2} \land \mathtt{L_0}^{\pv_2}) \big) \Big),
\end{split}
\end{equation}
and 
\begin{equation} \label{eq:def_pn'}
\begin{split}
& \P^{\pv_1} \Big( (\X \mathtt{H_0}^{\pv_1}) \Rightarrow 
\big( \F ( \mathtt{F}^{\pv_1} \land \mathtt{L_1}^{\pv_1}) \big) \Big) 
\\ & \quad \approx_\varepsilon \P^{\pv_2} \Big( (\X \mathtt{H_1}^{\pv_2}) \Rightarrow 
\big( \F ( \mathtt{F}^{\pv_2} \land \mathtt{L_1}^{\pv_2}) \big) \Big),
\end{split}
\end{equation}
where $\approx_\varepsilon$ stands for approximately equal within $\varepsilon$ 
and the next operator signifies that the secret is established in the first 
step of execution.
In~\eqref{eq:def_pn}, $\pv_1$ is a random execution of the program, where it 
sets $h = 0$ at the second step and finally yields $l=0$ and $\pv_2$ is a 
random execution of the program, where it sets $h = 1$ at the second step and 
finally yields $l=0$; and similarly for~\eqref{eq:def_pn'}.

\subsection{Dining Cryptographers}
\label{ssub:dc}

Several cryptographers sit around
a table having dinner. Either one of the cryptographers or, alternatively,
the National Security Agency (NSA) must pay for their meal. The cryptographers
respect each other's right to make an anonymous payment but want to find out
whether the NSA paid. So they decide to execute the following~protocol:
\begin{itemize}

\item Every two cryptographers establish a shared one-bit secret by tossing an unbiased coin and only informs the cryptographer on the right of the outcome.

\item Then, each cryptographer publicly states whether the two coins that it can see (the one it flipped and the one the left-hand neighbor flipped) agree if he/she did not~pay.

\item However, if a cryptographer actually paid for dinner, then it instead states the opposite  -- disagree if the coins are the same and agree if the coins are different.

\item An even number of agrees indicates that the NSA paid, while an odd number indicates that a cryptographer~paid.

\end{itemize}

The protocol can be modeled by a DTMC with the states labeled by the values of the Boolean variables mentioned below.
In addition, the state labels $\mathtt{C_i}$ for $i = 1,2,3$ indicate that
cryptographer $i$ paid, and $\mathtt{C_0}$ indicates that the NSA paid.
The common shared secret between two cryptographers $i$ and $j$ is indicated by the label $\mathtt{S}_{ij}$.
The final result of the process is indicated by a Boolean variable $\mathtt{P}$, where $\mathtt{P}$ if a cryptographer paid, and $\neg \mathtt{P}$ otherwise.
We define an information-flow security condition that given that some cryptographer paid, the probability that either cryptographer $i$ or $j$ paid are (approximately) equal irrespective of the common shared secret between them, i.e., the results of the coin tosses.
This is specified by the following \hpctls formula:
\begin{equation} \label{eq:def_dc}
\begin{split}
& \P^{\pv_1} \big( \F ( \neg \mathtt{S}_{ij}^{\pv_1} \land \F 
\mathtt{P}^{\pv_1} ) \big) 
	 \approx_\varepsilon \P^{\pv_2} \big( \F
( \mathtt{S}_{ij}^{\pv_2} \land \F \mathtt{P}^{\pv_2} ) \big)       
 \\ & \quad \approx_\varepsilon
\P^{\pv_3}  \big( \F
( \neg \mathtt{S}_{ij}^{\pv_3} \land \F \mathtt{P}^{\pv_3} ) \big) 
\approx_\varepsilon 
\P^{\pv_4} \big( \F
( \mathtt{S}_{ij}^{\pv_4} \land \F \mathtt{P}^{\pv_4} ) \big).
\end{split}
\end{equation}
where $\approx_\varepsilon$ stands for approximately equal within $\varepsilon$.
In~\eqref{eq:def_dc}, $\pv_1$ is a random execution of the protocol, 
where the common shared secret between two cryptographers $i$ and $j$ is set to $\mathtt{S}_{ij}$
during the execution and the final return is $\mathtt{P}$ -- i.e. some cryptographer paid;
and similarly for $\pv_2$, $\pv_3$, and $\pv_4$.

\subsection{Randomized Cache Replacement Policy} \label{sub:cache}

Cache replacement policies decide which \emph{cache lines} are replaced in case of a \emph{cache 
miss}. Randomized policies employ random replacement as a countermeasure against 
cache flush attacks. On the negative side, they also 
introduce performance losses.
Following~\cite{canones2017security}, we model a cache as a \emph{Mealy 
machine} with the access sequence as the input.
Each state of the Mealy machine represents a unique configuration of the cache,
i.e., the cache lines stored.
The transition of the Mealy machine captures a \emph{random replacement policy} 
that for access to memory data in address $b$,
(i) if it is already stored in the cache, return Hit $\mathtt{H}$;
(ii) if it is not stored and the cache has free space, return Miss 
$\mathtt{M}$ and write $b$ in free space, and
(iii) if $b$ is not stored and the cache is full,  
then returns Miss $\mathtt{H}$, and randomly overwrite a line (with uniform distribution) with $b$.

The performance requirement of such a policy is that, from an empty cache, 
after $N$ steps (when the cache almost fills), in a time window of $T$, the 
probability of observing 
$T$ consecutive $\mathtt{H}$ should be greater than that of 
observing $\mathtt{H}$ only $T-1$ times in that window.
This is formally expressed as:
\begin{align} \label{eq:def_rcr}
\P^{\pv_1}( \X^{(N)} \G^{\leq T} \mathtt{H}^{\pv_1} ) >  
\P^{\pv_2}( \X^{(N)} \varphi^{\pv_2} ) + \varepsilon,
\end{align}
where $\varepsilon > 0$ is a parameter, $\varphi^{\pv_2}$ means there is one $\texttt{M}$ for $N$ consecutive accesses, formally expressed as
\[
\begin{split}
& 
\varphi^{\pv_2} = \big( \texttt{M}^{\pv_2} \land \X \texttt{H}^{\pv_2} \land  \ldots \land 
\X^{(T-1)} \texttt{H}^{\pv_2} \big)
\\ & \qquad
\lor \ldots \lor \big( \texttt{H}^{\pv_2} \land \ldots \land \X^{(T-2)} \texttt{H}^{\pv_2} \land  \X^{(T-1)} \texttt{M}^{\pv_2} \big)
\end{split}
\]
where $\mathtt{B}$ indicates the initial state of an empty cache, and $\X^{(N)}$ represents the $N$-fold composition of $\X$.
In~\eqref{eq:def_rcr}, $\pv_1$ is a random execution of the cache replacement policy, 
where starting from the step $N$, there are $T$ consecutive hits $\texttt{H}$;
$\pv_2$ is a random execution, 
where starting from the step $N$, there is only one miss $\texttt{M}$ for the next $T$ steps.

\subsection{Generalized Probabilistic Causation} 
\label{sub:probabilistic_causation}

\hpctls can express conditional probabilities over multiple independent 
computation trees, which is not possible in 
\hpctl~\cite{ab18}.
{\em Probabilistic causation}~\cite{hitchcock_ProbabilisticCausation_2018} 
asserts that
if the cause~$\psi^{\pvs}$ happens,
the probability of occurring an effect~$\varphi^{\pvs}$ should be higher than 
the probability of occurring 
$\varphi^{\pvs}$ when $\psi^{\pvs}$ does not happen.
\yw{Here, we allow the cause and effect 
to be hyperproperties to capture probabilistic causality 
between security properties, e.g., the existence of 
a side-channel (see \cref{sub:sc_vul}) 
results in another side-channel.}
We can specify that for any two premises 
(i.e., initial states), $\psi^{\pvs}$ probabilistically causes $\varphi^{\pvs}$ 
as follows:
\begin{equation} \label{eq:def_pc}
	\frac{\P^{\pvs_1} (\psi^{\pvs_1} \land \varphi^{\pvs_1})}{\P^{\pvs_2} 
(\psi^{\pvs_2})} 
	> \frac{\P^{\pvs_3} (\neg \psi^{\pvs_3} \land 
\varphi^{\pvs_3})}{\P^{\pvs_4} (\neg \psi^{\pvs_4})}.
\end{equation}
In \eqref{eq:def_pc}, $\pvs_1$ is a tuple of random executions
where both the cause $\psi$ and effect $\varphi$ hold;
$\pvs_2$ a tuple of random executions 
where the cause $\psi$ holds.
Thus, the left-hand side of \eqref{eq:def_pc}
is the conditional probability of the effect, when the cause holds.
Similarly, the right-hand side of \eqref{eq:def_pc}
is the conditional probability of the effect, when the cause 
does not hold.

The cause and effect in \eqref{eq:def_pc} can themselves be hyperproperties.
For instance, the cause can be the violation of probabilistic noninterference 
(i.e.,~\eqref{eq:def_pn}) and the effect can be a breach of safety. 
That is, 
leakage of information increases the probability of compromising safety.
This probabilistic causation of hyperproperties cannot
be expressed by \pctls or 
any of its existing extensions 
including~\cite{ar08}.

\section{Relation to Other Temporal Logics}
\label{sec:express}

In this section, we illustrate the expressive power of \hpctls
by comparing it with 
\pctls~\cite{baier_PrinciplesModelChecking_2008},
\hpctl~\cite{ab18},
and \hltl~\cite{clarkson_TemporalLogicsHyperproperties_2014}.

\subsection{Relation to PCTL$^*$} 
\label{sec:hpctls-pctls}

In a \pctls formula, a probability operator {\em implicitly} incorporates a 
single random sample path drawn from a (probabilistic) computation 
tree.
In \hpctls, such random path variables are explicitly specified.
For example, checking the nested \pctls formula
$$
\P^{J_1} \big( \X (\P^{J_2} (\varphi)) \big),
$$
involves two random sample paths from a root computation tree (for 
$\P^{J_1}$) and a sub computation tree (from the second state of the first 
path for $\P^{J_2}$), respectively.
Thus, in order to specify this formula in \hpctls, we need to explicitly employ 
two random path variables $\pv_1$ and $\pv_2$ for the two probability 
operators, where sub-formula $\varphi$ is checked on $\pv_2$ of the sub 
computation tree, whose root is randomly given by $\pv_1(1)$ 
(see \cref{fig:ex0}).
\begin{figure}[t]
\centering
  \begin{tikzpicture}
    \draw[fill] (0, 0) ellipse (2pt and 2pt);
    \draw[->, >=stealth] (0, 0) node[below] {$\ms$} -- (1, 0);
    \draw[dashed, ->, >=stealth] (0, 0) -- node[above] {$\pv_1$} (0.6, 0.3);
    \draw[dashed, ->, >=stealth] (0, 0) -- (0.6, -0.3);
    \draw[fill] (1, 0) ellipse (2pt and 2pt);
    \draw[->, >=stealth] (1, 0) node[below] {$\pv_1(1)$} -- (2, 0);
    \draw[dashed, ->, >=stealth] (1, 0) -- node[above] 
    {$\pv_2$} (1.6, 0.3);
    \draw[dashed, ->, >=stealth] (1, 0) -- (1.6, -0.3);
  \end{tikzpicture}
  \caption{Computation trees for~\eqref{eq:ex0}. The dashed arrows show other possible sample values of the path variables to illustrate the probabilistic 
computation~tree.}
  \label{fig:ex0}
\end{figure}
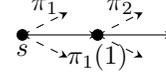
Hence, sub-formula $\X (\P^{J_2} (\varphi))$ is checked on $\pv_1$. The 
corresponding \hpctls formula is: 
\begin{equation} \label{eq:ex0}
\P^{\pv_1} \big( \X (\P^{\pv_2} 
(\varphi^{\pv_2}) \in J_2)^{\pv_1} \big) \in J_1.
\end{equation}

Formally, we first show that \hpctls subsumes \pctls.
This is done by providing the set of rules 
to translate every \pctls formula to a \hpctls formula.
We use the syntax and semantics of the \pctls from~\cref{sub:app_pctls}.

\begin{theorem}
\label{thrm:hpcts-pctls}
\hpctls subsumes \pctls.
\end{theorem}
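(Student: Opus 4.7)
The plan is to give a syntax-directed translation $T$ from \pctls to \hpctls and verify, by structural induction on the formula, that it preserves satisfaction on every DTMC. The main idea is to use an explicit, fresh \hpctls path variable $\pv$ to play the role of the ``current'' evaluation context that \pctls handles implicitly inside its probability operators; nested probability operators are then re-anchored to that context via the indexing construct $(\cdot)^{\pv}$ of \cref{fig:semantics}.

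First I would define $T(\phi,\pv)$ by induction on $\phi$. The atomic, Boolean, and temporal cases simply propagate $\pv$: $T(\ap,\pv)=\ap^{\pv}$, $T(\neg\phi,\pv)=\neg T(\phi,\pv)$, $T(\phi_1\wedge\phi_2,\pv)=T(\phi_1,\pv)\wedge T(\phi_2,\pv)$, $T(\X\phi,\pv)=\X\, T(\phi,\pv)$, and $T(\phi_1\U^{\leq k}\phi_2,\pv)=T(\phi_1,\pv)\U^{\leq k}T(\phi_2,\pv)$. The crucial case is the probability operator:
\[
T(\P^{J}\phi,\pv)\;=\;\bigl(\P^{\pv'}\, T(\phi,\pv')\in J\bigr)^{\pv},
\]
where $\pv'$ is fresh and $\in J$ is expanded into the appropriate conjunction of comparisons $\Join p$ available in the \hpctls grammar. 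For a top-level \pctls formula, the outer $(\cdot)^{\pv}$ can be dropped because every \hpctls path variable defaults to a path from $\msi$, matching \pctls's convention of evaluating state formulas at $\msi$.

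The invariant I would prove is: for every interpretation $(\m,V)$ and every path variable $\pv$, writing $\pa=V(\pv)$, $(\m,\pa)\models_{\pctls}\phi$ iff $(\m,V)\models_{\hpctls}T(\phi,\pv)$. The atomic, Boolean, $\X$, and $\U^{\leq k}$ cases follow routinely from the inductive hypothesis together with the observation that the shift $V^{(i)}$ restricts on the $\pv$-component to the suffix shift $\pa^{(i)}$, so both sides see the same \pctls path semantics in each unfolding step.

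The main obstacle is the probability-operator case, where I must show that the outer $(\cdot)^{\pv}$ correctly re-anchors the inner sampling. By the second judgment rule in \cref{fig:semantics}, $(\cdot)^{\pv}$ rebinds the path variables appearing in its scope to $V(\pv)$; in particular this forces $V(\pv')(0)=\pa(0)$, so the ``current initial state'' used by the \hpctls semantics of $\P^{\pv'}$ (see Point 4 after \cref{fig:semantics}) is exactly the state at which \pctls evaluates $\P^{J}\phi$. Applying the inductive hypothesis to $\phi$ on a freshly sampled $\pa'\in\paths(\pa(0))$ then shows that the induced \hpctls probability measure coincides with the \pctls measure $\pr\{\pa'\in\paths(\pa(0)):(\m,\pa')\models_{\pctls}\phi\}$, and comparing with $J$ yields the same truth value on both sides. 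Freshness of $\pv'$ rules out variable capture, so nested probability operators are handled uniformly and the equivalence lifts to the top level, giving the desired subsumption.
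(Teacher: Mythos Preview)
Your proposal is correct and follows essentially the same approach as the paper: the identical syntax-directed translation $T(\phi,\pv)$ with the key clause $T(\P^{J}\phi,\pv)=(\P^{\pv'}T(\phi,\pv')\in J)^{\pv}$ for fresh $\pv'$. If anything you are more careful than the paper, which simply lists the translation rules and asserts that correctness ``follows directly from the semantics,'' whereas you spell out the inductive invariant and the re-anchoring argument for the probability case.
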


\begin{proof}
We prove this statement by showing that 
any \pctls formula can be transformed into a \hpctls formula 
with the same meaning.
In other words, for any given DTMC, the satisfaction/dissatisfaction 
of the formula is preserved during the transformation.

Given a DTMC $\m$,
the satisfaction 
of a \pctls state formula $\Phi$
(as defined in \cref{sub:app_pctls})
transforms into the satisfaction of a \hpctls formula by
\begin{equation} \label{eq:transformation}
  \m \models \Phi \textrm{ if and only if } (\m, V) \models \T(\Phi, \pv),
\end{equation}
for any path assignment $V$.
In \eqref{eq:transformation}, 
the \pctls state formula $\Phi$ implicitly involves a random path
(more precisely, a random computation tree),
which is explicitly named by $\pi$ in 
the corresponding \hpctls formula.
The transformation $\T$ is defined inductively as follows: 
\begin{itemize}
  \item $\T(\ap, \pv) = \ap^\pv$

  \item $\T(\neg \varphi, \pv) = \neg \T(\varphi, \pv)$ 
  
  \item $\T(\neg \Phi, \pv) = \neg \T(\neg \Phi, \pv)$
  
  \item $\T(\varphi_1 \land \varphi_2, \pv) = \T(\varphi_1, \pv) \land 
  \T(\varphi_2, \pv)$
  
  \item $\T(\Phi_1 \land \Phi_2, \pv) = \T(\Phi_1, \pv) \land \T(\Phi_2, \pv)$ 
  
  \item $\T(\X \varphi, \pv) = \X \T(\varphi, \pv)$
  
  \item $\T(\varphi_1 \U^{\leq k} \varphi_2, \pv) = \T(\varphi_1, \pv) \U^{\leq 
  k} \T(\varphi_2, \pv)$

  \item $\T(\P^{J} (\varphi), \pv) = \big( \P^{\pv'} (\T(\varphi, \pv')) \in J 
\big)^\pv$
  with $\pv' \neq \pv$,
\end{itemize}
where $\Phi$ is a \pctls state formula, $\varphi$ is a \pctls path formula 
  (as defined in \cref{sub:app_pctls}).
The correctness of the transformation follows directly from the semantics of the logics.
The transformation~\eqref{eq:transformation} holds 
for any path assignment $V$, 
since it can be shown that the path variables in $\T(\Phi, \pv)$ 
are all (probabilistically) quantified 
and actually do not receive assignment from $V$.
\end{proof}

Next, we show that \hpctls \emph{strictly} subsumes \pctls.
Specifically, we construct a DTMC and a \hpctls formula,
and show that this formula cannot be expressed by~\pctls.

\begin{theorem}
\label{stat:hpcts-pctls_strict}
\hpctls is strictly more expressive than \pctls with respect to DTMCs.
\end{theorem}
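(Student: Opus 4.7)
The plan is to exhibit a single \hpctls formula that no \pctls formula matches on all DTMCs, exploiting the fact that \pctls can only threshold a single probability expression while \hpctls permits arithmetic on probabilities from independently quantified paths. I take as witness the formula $\varphi = \P^{\pv_1}(\F \ap_1^{\pv_1}) + \P^{\pv_2}(\F \ap_2^{\pv_2}) > 1$, built via the elementary function $f(x,y) = x + y$ in the $\rho$-grammar. To separate the two logics, I will introduce a three-parameter DTMC family $\{\m_{p_1,p_2,p_3}\}$ over atoms $\{\ap_1,\ap_2\}$ whose initial state $\ms_0$ (unlabeled) branches in one step into four absorbing self-loops $\ms_1,\ms_2,\ms_3,\ms_4$ with transition probabilities $p_1, p_2, p_3, 1 - p_1 - p_2 - p_3$ and labels $\{\ap_1\}, \{\ap_2\}, \{\ap_1,\ap_2\}, \emptyset$, where $(p_1,p_2,p_3) \in D := \{p \in \real_{\geq 0}^3 : p_1 + p_2 + p_3 \leq 1\}$. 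Direct calculation gives $\P(\F \ap_1) = p_1 + p_3$ and $\P(\F \ap_2) = p_2 + p_3$, so the validity set $V_\varphi = \{p \in D : \m_p \models \varphi\}$ equals $D \cap \{p_1 + p_2 + 2p_3 > 1\}$, a half-space in $D$ with outward normal $(1,1,2)$.

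Next, by structural induction on \pctls formulas, I will show that for every \pctls formula $\Phi$, its validity set $V_\Phi = \{p \in D : \m_p \models \Phi\}$ is a finite Boolean combination of half-spaces whose normals lie in the restricted set $N \subset \{-1,0,1\}^3$ of vectors whose nonzero coordinates share a common sign. The reduction exploits that every successor of $\ms_0$ is self-looping with a fixed label, so any \pctls state subformula evaluated at $\ms_i$ for $i \geq 1$ is a parameter-independent boolean. Consequently, for any path subformula, its truth on the path $\ms_0 \ms_i \ms_i \cdots$ is a Boolean combination of parameter-dependent booleans (from subformulas evaluated at $\ms_0$, controlled by the induction hypothesis) and parameter-independent booleans (from evaluations at $\ms_i$); after a case split on the latter, the probability $\P_{\ms_0}$ of any path formula becomes, on each case, a $\{0,1\}$-coefficient sum of $\{p_1,p_2,p_3,\,1 - p_1 - p_2 - p_3\}$, which after eliminating $1 - p_1 - p_2 - p_3$ is a linear form whose coefficient vector lies in $N$. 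Thresholding by an interval therefore contributes only half-spaces with normals in $N$ to $V_\Phi$, and the closure under $\neg$ and $\land$ preserves the Boolean combination structure.

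Finally, a dimensional argument closes the proof. If $V_\Phi = V_\varphi$ for some \pctls formula $\Phi$, then the relative boundary $\partial V_\varphi \cap D^\circ$, a relatively open $2$-dimensional piece of the hyperplane $\{p_1 + p_2 + 2p_3 = 1\}$, must be contained in the finite union of hyperplanes carrying the faces of the polyhedra that constitute $V_\Phi$, all of which have normals in $\pm N$. But $(1,1,2)$ is not parallel to any vector in $\pm N$ (each coordinate of an element of $N$ has magnitude at most $1$, whereas the third coordinate of $(1,1,2)$ equals $2$), so each such hyperplane meets $\partial V_\varphi$ in at most a $1$-dimensional set, and a finite union of $1$-dimensional sets cannot cover the $2$-dimensional set $\partial V_\varphi \cap D^\circ$, a contradiction. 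The main obstacle I expect is the inductive step that classifies the coefficient vocabulary of \pctls validity sets under nested probability and temporal operators; the simplicity of the chosen family, in which every non-initial state is absorbing with a fixed label, is precisely what confines the achievable coefficient vectors to $N$ and makes the geometric separation go through.
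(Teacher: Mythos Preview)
Your approach is correct and takes a genuinely different, more robust route than the paper. The paper works with a single fixed DTMC (an initial state branching uniformly with probability $\tfrac13$ to three absorbing leaves) and the \hpctls formula asserting that a certain ratio of two probabilities equals $\tfrac12$; it then observes that on this one DTMC any \pctls probability expression $\P(\psi)$ can only take values in $\{0,\tfrac13,\tfrac23,1\}$, so the value $\tfrac12$ is unreachable. That is a very short arithmetic observation, but as written it separates only the \emph{numerical} sub-expressions on one model and does not by itself rule out a \pctls state formula that agrees with the \hpctls formula on all DTMCs --- which is what formula-level expressiveness requires. Your approach is heavier but addresses exactly this: by varying over a three-parameter family and classifying, via structural induction, the shape of every \pctls validity set as a Boolean combination of half-spaces with normals in a restricted set $N\subset\{-1,0,1\}^3$, you exhibit a \hpctls formula whose validity boundary has normal $(1,1,2)\notin\pm N$ and finish with a dimension count. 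This is the standard and rigorous kind of separation argument, and it buys you an honest ``for all DTMCs'' conclusion.

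One imprecision to fix when you write it out: in the inductive step for $\P^J(\psi)$ you say you case-split on the parameter-\emph{independent} booleans (evaluations of state subformulas at the absorbing $s_i$). The split that actually does the work is on the parameter-\emph{dependent} booleans --- the truth values at $s_0$ of the state subformulas occurring in $\psi$ --- which, by the induction hypothesis, partition $D$ into finitely many regions each already a Boolean combination of half-spaces with normals in $N$. Only after fixing those does the set of $i$ with $\psi$ true on $s_0 s_i^\omega$ become constant, yielding on each region the $\{0,1\}$-coefficient sum of $\{p_1,p_2,p_3,1-p_1-p_2-p_3\}$ you want. Also, your non-parallelism justification (``coordinates have magnitude at most $1$'') is about magnitudes rather than ratios; the clean argument is that $(1,1,2)=cv$ with $v\in N$ forces $v_1=v_2\in\{\pm1\}$, hence $c=\pm1$, whence $|cv_3|\le 1\neq 2$. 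With these two adjustments the proof goes through as outlined.
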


\begin{proof}
Consider the DTMC shown in \cref{fig:thm1} and the following \hpctls 
formula:
\[
\varphi = \bigg(\frac{\P^{\pv_1}
\big(\mathsf{init}^{\pv_1} \Rightarrow \F (\ap_1^{\pv_1} \land
\ap_2^{\pv_1})\big)}{\P^{\pv_2} \big(\mathsf{init}^{\pv_2} \Rightarrow \F
\ap_2^{\pv_2} \big)} = \frac{1}{2}\bigg).
\]
Now, we prove that $\varphi$ cannot be expressed in \pctls. By the syntax and 
semantics of \pctls, it suffices to show that $\varphi$ cannot be expressed by 
a formula $\P (\psi)$, where $\psi$ is a \pctls path formula derived by 
concatenating  a set of \pctls state formulas $\Phi_1, \ldots, \Phi_n$ with 
$\land, \neg$, or the temporal operators. These state formulas are either true 
or false in the states $\ms_0$, $\ms_1$, $\ms_2$, and $\ms_3$.
Thus, the satisfaction of $\psi$ defines a subset of the paths $\paths(\ms_0) 
= 
\{ \ms_0 \ms_1^{\omega}, \ms_0 \ms_2^{\omega}, \ms_0 \ms_3^{\omega}\}$ 
in the DTMC. 
Since every path in $\paths(\ms_0)$ is taken with probability $1/3$, 
formula $\P (\psi)$ can only evaluate to a value in $\{0, 1/3, 2/3, 1\}$.
However, by the semantics of \hpctls, the fractional probability on the right 
side of the implication has value $1/2$; thus, $\varphi$ evaluates to true and 
cannot be expressed by $\P (\psi)$~in~\pctls.
\end{proof}

\begin{figure}[t]
    \centering
    \begin{tikzpicture}
      \node[state, initial, initial text=, label={$\{\mathsf{init}\}$}] (0) 
{$\ms_0$};
      \node[state, below of=0, label=below:{$\{\ap_1\}$},
    xshift=-1.8cm,yshift=-.5cm] (1) {$\ms_1$};
      \node[state, below of=0, label=below:{$\{\ap_2\}$},yshift=-.5cm] (2)
    {$\ms_2$};
      \node[state, below of=0, label=below:{$\{\ap_1,\ap_2\}$},
    xshift=1.8cm,yshift=-.5cm] (3) {$\ms_3$};
    
      \path[->] (0) edge node[left] {$\frac{1}{3}$} (1)
      (0) edge node[left] {$\frac{1}{3}$} (2)
      (0) edge node[left] {$\frac{1}{3}$} (3)
      (1) edge[loop right] node[right] {$1$} ()
      (2) edge[loop right] node[right] {$1$} ()
      (3) edge[loop right] node[right] {$1$} ();
    \end{tikzpicture}
    \caption{DTMC where \hpctls strictly subsumes \pctls.\label{fig:thm1}}
\end{figure}

\subsection{Relation to HyperPCTL}

Similar to \hpctl~\cite{ab18}, \hpctls allows probability arithmetics and comparison.
For example, the \hpctls formula
$$\varphi = \big( \P^{\pv_1} (\F \ap^{\pv_1}) - \P^{\pv_2} (\F \ap^{\pv_2}) > c 
\big)$$
for some $c \in \real$ means the satisfaction probability of ``finally $\ap$'' 
is greater at least by $c$ on a random path variable $\pv_1$ than another 
random path variable $\pv_2$.
But in general, \hpctl and \hpctls do not subsume each other.

\begin{theorem}
\label{thm:hpctls-hpctl}

On DTMCs, \hpctls strictly subsumes \hpctl.

\end{theorem}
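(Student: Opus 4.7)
The plan is to establish the two directions separately: first, exhibit a syntactic translation showing every \hpctl formula has a semantically equivalent \hpctls formula (subsumption); second, exhibit a concrete \hpctls formula with a witnessing DTMC that cannot be matched by any \hpctl formula (strictness). This mirrors the standard separation between \pctl and \pctls, but must take into account the additional hyper-quantification mechanism, using the semantics already fixed in \cref{fig:semantics}.

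For the subsumption direction, I would recall the \hpctl grammar from~\cite{ab18}, in which each probability operator is applied directly to a (possibly bounded) non-nested temporal formula over atomic propositions indexed by path variables. Since \hpctls imposes no such restriction, I would define a translation $\T$ by structural induction, mapping each atomic proposition to itself, each Boolean connective homomorphically, each quantifier-free temporal formula $\varphi_1 \U^{\leq k} \varphi_2$ to $\T(\varphi_1)\U^{\leq k}\T(\varphi_2)$, and each \hpctl probability operator of the form $\P^{(\pv_1, \ldots, \pv_n)}(\psi) \Join p$ to the \hpctls expression $\P^{(\pv_1, \ldots, \pv_n)}(\T(\psi)) \Join p$, with the constant $p$ treated as a $0$-ary elementary function so that the comparison $\rho \Join \rho$ rule of \hpctls applies. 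Correctness follows by induction on formula depth, comparing the \hpctl satisfaction clauses (as given in~\cite{ab18}) side-by-side with the clauses of \cref{fig:semantics}; at each step the two semantics reduce to exactly the same probability measure over the same $n$-tuple of paths.

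For the strictness direction, I would exploit precisely the feature that \hpctl forbids and \hpctls permits, namely nesting of temporal operators inside a single probability operator. A natural candidate is the formula $\varphi = \P^{\pv}(\G\F\,\ap^{\pv}) > 1/2$, evaluated on a small DTMC that has several bottom strongly connected components (BSCCs), some where $\ap$ is visited infinitely often and some where it is not, tuned so that the limiting probability is strictly between $0$ and $1$ and is not producible by any unnested probability threshold. Following the classical \pctl-vs-\pctls argument, I would then argue that any \hpctl formula applied to this DTMC decomposes (modulo the non-temporal Boolean/arithmetic envelope) into a Boolean/arithmetic combination of values of the form $\P^{(\pvs)}(\psi) \Join p$, where each inner $\psi$ is a non-nested temporal formula. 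Each such atomic probability is a polynomial expression in the finitely many one-step reachability probabilities of the chosen DTMC, and I would show, by enumerating the shapes of non-nested $\psi$, that none of them equals the long-run frequency value realized by $\G\F\,\ap$, so no Boolean/arithmetic combination can distinguish the two DTMCs that agree on all such non-nested probabilities but disagree on $\varphi$.

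The main obstacle is the inexpressibility argument: one must rule out \emph{every} \hpctl formula, including those involving hyper-quantification across several path variables and arithmetic on probabilities. I would manage this by constructing two DTMCs $\m_1$ and $\m_2$ that are indistinguishable by any non-nested bounded-until formula on each individual path variable (e.g., by sharing identical $k$-step path distributions up to a carefully chosen horizon), yet produce different values of $\P^{\pv}(\G\F\,\ap^{\pv})$; because \hpctl probability operators only see non-nested path formulas, their evaluations on $\m_1$ and $\m_2$ coincide, while $\varphi$ separates them. Packaging this pair-of-models argument cleanly, and checking that hyper-quantification over tuples does not accidentally provide a workaround through joint distributions, is the technically delicate step.
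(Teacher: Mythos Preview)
Your subsumption direction has a genuine gap. In \hpctl (see the grammar in the appendix and~\cite{ab18}), the top-level formulas are built with \emph{state} quantifiers $\exists \sv.\,\psi$ and $\forall \sv.\,\psi$ over state variables $\sv$, and the probability operator is written $\P(\varphi)$ with no explicit path-variable index; atomic propositions are of the form $\ap_\sv$, not $\ap^\pv$. Your translation $\T$ never touches these state quantifiers: you only translate Boolean connectives, temporal operators, and a probability operator that you have already written in \hpctls notation $\P^{(\pv_1,\ldots,\pv_n)}(\psi)$. The non-trivial content of the subsumption is precisely to eliminate $\exists \sv$ and $\forall \sv$, which \hpctls does not have. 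The paper does this by exploiting finiteness of the state space: it adds a fresh initial state with uniform transitions to every state of the DTMC and replaces $\exists \sv.\,\psi$ by a finite disjunction $\bigvee_{\ms\in\mss}(\ldots)$ (and dually for $\forall$), pushing the ``choice of state'' into the first step of a path and adjusting the probability threshold by the factor $1/|\mss|$. Without some construction of this kind, your inductive translation simply does not cover the \hpctl grammar.

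For strictness your plan is not wrong, but it is a much heavier route than necessary, and the delicate step you flag (ruling out every \hpctl formula, including arithmetic on probabilities and multi-path quantification) is real work that you have not yet carried out. The paper avoids all of this by a two-line detour: \hpctls subsumes \pctls (already established in \cref{stat:hpcts-pctls_strict}), whereas \hpctl does not subsume \pctls (cited from~\cite{ab18}); hence \hpctl cannot subsume \hpctls. Concretely the paper points to a formula with two nested temporal operators under a single probability, e.g.\ $\P^{(\pv_1,\pv_2,\pv_3)}(\ap_1^{\pv_1}\U(\ap_2^{\pv_2}\U\ap_3^{\pv_3}))$, as the separating witness. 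Your $\G\F$ idea is in the same spirit, but once the \pctls-vs-\hpctl separation is available off the shelf, there is no need to build a fresh two-model indistinguishability argument.
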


\begin{proof}
From \cref{stat:hpcts-pctls_strict}, \hpctls subsumes \pctls.
However, \hpctl does not subsume \pctls~\cite{ab18}. 
Thus, \hpctl does not subsume \hpctls.

More specifically, \hpctl cannot express the satisfaction probability of a 
formula with more than two nested temporal operators.
For example, the \hpctls formula $\P^{(\pv_1, \pv_2, \pv_3)} 
(\ap_1^{\pv_1} \U (\ap_2^{\pv_2} \U \ap_3^{\pv_3}))$
cannot be expressed by \hpctl.
This is similar to the fact that \pctl cannot express the satisfaction 
probability of an \ltl formula with more than two nested temporal operators
(but \pctls can).
%
%

On the other hand, \hpctls contains all the syntactic rules of \hpctl, except 
for the (existential and universal) state quantifications~\cite{ab18}.
A \hpctl formula with state quantifications can be expressed by \hpctls by 
enumerating over the finite set of states of the DTMC.
For example, \hpctl can specify that ``there exists a state $\ms$, 
such that $\P^{(\pv_1, \pv_2)} (\ap^{\pv_1} \U \ap^{\pv_2}) > p$,
where the initial state of $\pv_1, \pv_2$ is $\ms$."
To express this in \hpctls, we introduce an extra
initial state that goes to all the states $\mss$ of DTMC  
with probability $1 / \abs{\mss}$.
Then, the \hpctl specification can be expressed by
$$
\bigvee_{\ms \in \mss} \P^{(\pv_1, \pv_2)} 
\Big( \X \big( (\ms^{\pv_1} \land \ms^{\pv_2}) 
\land (\ap^{\pv_1} \U \ap^{\pv_2}) \big) \Big) > p/\abs{\mss},
$$
where ``$\X$" appears because the paths $\pv_1, \pv_2$ now start from the new initial state.
\end{proof}

\subsection{Relation to HyperLTL}
%
%
A \hltl formula can have multiple path variables. For example, let 
$$
\varphi_{\mathsf{hltl}} = \ap_1^{\pv_1} \U \ap_2^{\pv_2}
$$
be a \hltl subformula (i.e., without path quantification), meaning that $\ap_1$ 
is true on $\pv_1$ until $\ap_2$ is true on $\pv_2$.
Like \pctls, which allows for reasoning over the satisfaction probability of 
\ltl formulas, \hpctls allows for reasoning over the satisfaction probability 
of 
\hltl formulas.
For example, \hpctls subformula
$\P^{(\pv_1, \pv_2)} (\varphi_{\mathsf{hltl}}) > c$ 
means that the satisfaction probability of the \hltl formula 
$\varphi_{\mathsf{hltl}}$ is 
greater than $c$.
%
Moreover, in \hpctls, a \hltl formula can be probabilistically quantified in 
multiple ways.
Specifically, the path variables of the \hltl formula can be quantified at 
one time, or one-by-one in a certain order.
For example, instead of quantifying the \hltl formula $\varphi$ in a 
one-shot way for $\varphi_{\mathsf{hltl}}$, \hpctls also allows formula
$$
\psi_1 = \P^{\pv_1} \big( \P^{\pv_2} (\varphi_{\mathsf{hltl}}) > c_2 \big) > 
c_1.
$$
This means that 
the probability for finding path $\pv_1$ should be 
greater than $c_1$, such that the probability for finding another path $\pv_2$ 
to satisfy $\varphi_{\mathsf{hltl}}$ is greater than $c_2$.
%
By flipping the order of the probabilistic quantification for $\pv_1$ and 
$\pv_2$, we derive the formula 
$$
\psi_2 = \P^{\pv_2} \big( \P^{\pv_1} (\varphi_{\mathsf{hltl}}) > c_2 \big) > 
c_1.
$$
Clearly, the meaning of $\psi_1$ and $\psi_2$ is different, 
showing the significance of the order of the probabilistic quantification.

\section{Statistical Model Checking}
\label{sec:non-nested}

In this section, we design statistical model checking (SMC) algorithms for 
\hpctls formulas on labeled discrete-time Markov chains.
As with previous works on 
SMC~\cite{sen_StatisticalModelChecking_2004,agha_SurveyStatisticalModel_2018,
legay_StatisticalModelChecking_2010},
we focus on handling probabilistic operators 
by sampling.
The temporal operators can be handled in the same way
as for \hltl~\cite{clarkson_TemporalLogicsHyperproperties_2014},
and thus will not be discussed here.

\subsection{Challenges in Developing SMC for HyperPCTL$^*$}

To statistically verify \hpctls, the main challenge is to
use {\em sequential probability ratio tests} (SPRT) to handle 
the following issues:
\begin{itemize}
\item \textbf{Probabilistic quantification of multiple paths.} Consider the 
following formula:
\begin{equation} \label{eq:smc1}
\P^{\pvs} (\varphi) > p, 
\end{equation}
where $\pvs = (\pv_1, \ldots, \pv_n)$ is a tuple of path variables.
Unlike the conventional SMC techniques, evaluating such a formula requires 
drawing {\em multiple} samples (we assume the truth value of $\varphi$ can be 
determined, given the sample value for $\pvs$).

\item \textbf{Arithmetics of probabilistic quantifications.} \ Consider the 
following formula:
\[
f \big( \P^{\pvs_1} (\varphi_1), \ldots,  \P^{\pvs_n} (\varphi_n) \big) 
> p,
\]
where for $i \in [n]$, $\pvs_i$ is a tuple of path variables and the truth 
value of $\varphi_i$ can be determined, given the sample value for $\pvs_i$.
Equivalently, this can be expressed as 
\begin{equation} \label{eq:smc2}
\big( \P^{\pvs_1} (\varphi_1), \ldots,  \P^{\pvs_n} (\varphi_n) \big)  
\in D,
\end{equation} 
where
\[
D = \{ (x_1, \ldots, x_n) \in [0, 1]^{n} \mid f (x_1, \ldots, x_n) > p 
\}.
\]
This can be viewed as an application of the currying technique
in first-order logic that builds the equivalence between functions and 
relations~\cite{enderton2001mathematical}.
In addition, since the functions $f$ is elementary from the syntax of \hpctls, 
the boundary of the domain $D$ is also elementary.

\item \textbf{Nested probabilistic quantification.} \ Consider the following 
formula:
\begin{equation} \label{eq:smc3}
\P^{\pvs_1} \P^{\pvs_2} \cdots \P^{\pvs_n} (\varphi) > p, 
\end{equation} 
where $i \in [n]$, $\pvs_i$ is a tuple of path variables and the truth value of 
$\varphi$ can be determined, given the sample value for all $\pvs_i$. This 
type of formula poses a challenge since the multiple paths drawn for each 
probability operator can be different from its previous or next~operator.
\end{itemize}
These probabilistic quantifications are unique to \hpctls, therefore, they are 
not directly supported by existing statistical model checking algorithms 
designed for non-hyper probabilistic temporal 
logics~\cite{agha_SurveyStatisticalModel_2018}. In the next subsections, we 
address these challenges.

\subsection{Probabilistic Quantification of Multiple Parallel Paths}
\label{sub:3a}

Consider the formula~\eqref{eq:smc1} again. 
We denote the satisfaction probability of the subformula $\varphi$ in~\eqref{eq:smc1} for a given DTMC $\m$ and path assignment $V$ by:
\begin{equation} \label{eq:3a_finite}
	\begin{split}
		p_\varphi = & \pr \Big\{ \Big( \pa_i \in \paths \big( V (\pi_i) (0) 
	\big) \Big)_{i \in [n]} \ : 
		\\ & \qquad 
		\Big( \m, V \big[ \pv_i \mapsto \pa_i \textrm{ for all } i \in [n] \big] \Big) 
	   \models \varphi \Big\}.
	\end{split}
\end{equation}
Following the standard procedure
\cite{agha_SurveyStatisticalModel_2018,larsen_StatisticalModelChecking_2016},  
to simplify our discussion, we first assume that $\varphi$ is a bounded-time specification, i.e., its truth value can be evaluated on the finite 
prefixes of the sample paths.
Unbounded-time specifications can be handled similarly
with extra considerations on the time horizon.
In addition, we make the following assumption on 
the \emph{indifference region}.

\begin{assumption} \label{ass:sprt_indiff}
The satisfaction probability of $\varphi$ is not within 
the indifference region 
$(p - \varepsilon, p + \varepsilon)$ for some $\varepsilon > 0$; 
i.e.,
\begin{equation} \label{eq:indiff_a}
	p_\varphi \notin (p - \varepsilon, p + \varepsilon).
\end{equation}
\end{assumption}

From \cref{ass:sprt_indiff}, to statistically verify~\eqref{eq:smc1},
it suffices to solve the following hypothesis testing (HT) problem:
\begin{equation} \label{eq:ht}
H_0: p_\varphi \leq p - \varepsilon, 
\quad H_1: p_\varphi \geq p + \varepsilon.
\end{equation}
The \yw{hypotheses $H_0$ and $H_1$ in \eqref{eq:ht} are \emph{composite}
since each of them contains infinitely many \emph{simple} hypotheses of
the form $H_0: p_\varphi = p_0$ and $H_1: p_\varphi = p_1$, respectively,}
where $p_0 \in [0, p - \varepsilon]$ and $p_1 \in [p + \varepsilon, 1]$.

\yw{To handle composite hypotheses with SPRT,
a common technique is to consider}
the two most ``indistinguishable'' simple hypotheses
\begin{equation} \label{eq:ht_smc1}
H_0: p_\varphi = p - \varepsilon, 
\quad H_1: p_\varphi = p + \varepsilon
\end{equation}
from the two composite hypotheses in \eqref{eq:ht}, respectively.
From~\cite{sen_StatisticalModelChecking_2004}, if existing samples can test 
between $p - \varepsilon$ and $p + \varepsilon$ for some given statistical 
errors, then these samples are sufficient to test between 
$p - \varepsilon$ and $p_\varphi$ with
the true satisfaction probability
$p_\varphi \in [p + \varepsilon, 1]$
(or between $p + \varepsilon$ and $p_\varphi$ with
the true satisfaction probability
$p_\varphi \in [0, p - \varepsilon]$)
for the same statistical errors
(see \cref{sub:app_smc} for details).

\begin{remark} \label{rem:indifference}
The indifference region assumption
is necessary. If $\varepsilon = 0$,
then $H_0$ and $H_1$ in~\eqref{eq:ht_smc1} 
will be identical.
\end{remark}

To statistically test between $H_0$ and $H_1$
from~\eqref{eq:ht_smc1},
suppose we have drawn $N$ 
statistically independent
sample path tuples
$\underline{\pa}_1, \ldots, \underline{\pa}_N$
for the path variable $\pvs$ from the DTMC.
Let $T$ be the number of sample path tuples,
for which $\varphi$ is true.
This is similar to the statistical model checking of 
\pctls (see \cref{sub:app_smc} for detailed description), 
except that the truth value of $\varphi$ needs to be evaluated for tuples of paths 
instead of single paths.
Let us define, for $x \in (0, 1)$, the log-likelihood function~as 
\begin{equation} \label{eq:lambda_x}
    \lambda(x) = \ln \big( x^{T} (1 - x)^{N - T} \big),	
\end{equation}
then $\lambda(p-\varepsilon)$ and $\lambda(p + \varepsilon)$ 
are the log-likelihood of the two hypotheses 
$H_0$ and $H_1$ in~\eqref{eq:ht_smc1}, respectively.
As the number of sample path tuples $N$ increase,
the log-likelihood ratio
$\lambda(p+\varepsilon) - \lambda(p - \varepsilon)$
should increase (with high probability) if $H_1$ holds,
and should decrease if $H_0$ holds.
To achieve desired the
\emph{false positive} (FP) and \emph{false negative} (FN) ratios 
$\FP$ and $\FN$, respectively, defined by\footnote{Here, $\pr (\cdot \mid \cdot)$ stands for 
the conditional probability.}:
\begin{equation} \label{eq:FPFN}
\begin{split}
& \FP = \pr \big( \text{assert } H_1 \mid  H_0 \textrm{ is true} \big), 
\\ & \FN = \pr \big( \text{assert } H_0 \mid  H_1 \textrm{ is true} \big),
\end{split}
\end{equation}
the SPRT algorithm should continue sampling,
i.e., increase the number of samples $N$,
until one of the two following 
termination conditions hold
\cite{wald_SequentialTestsStatistical_1945}:

\begin{equation} \label{eq:sprt}
\begin{cases}
\text{assert } H_0, & \text{ if } \lambda(p-\varepsilon) - \lambda(p+\varepsilon) > \ln \frac{1 - \FN}{\FP}, \\
\text{assert } H_1, & \text{ if } \lambda(p+\varepsilon) - \lambda(p-\varepsilon) > \ln \frac{1 - \FP}{\FN}.
\end{cases}
\end{equation}
This process is summarized by \cref{alg:sprt}.

\begin{algorithm}[!t]
\caption{SMC of $\P^{\pvs} (\varphi) > p$.}
\label{alg:sprt}
\begin{algorithmic}[1]
\Require Desired FP and FN ratios $\FP$ and $\FN$, indifference parameter 
$\varepsilon$.

\State $N \gets 0$, $T \gets 0$.

\While{True}

\State $N \gets N + 1$.

\State Draw a tuple of sample paths $\underline{\pa}_N$ (from the DTMC).

\If{$\varphi$ is true on $\underline{\pa}_N$}

\State $T \gets T + 1$.

\EndIf

\State Update $\lambda(p + \varepsilon)$ and $\lambda(p - \varepsilon)$ by~\eqref{eq:lambda_x}.

\State Check the termination condition~\eqref{eq:sprt}.

\EndWhile

\end{algorithmic}
\end{algorithm}

\subsection{Arithmetics of Probabilistic Quantifications}

Now, consider formula~\eqref{eq:smc2}. We denote the satisfaction probability 
of $\P^{\pvs_i} (\varphi_i)$ for each $i \in [n]$ for a given DTMC $\m$ and 
path assignment $V$ by:
\begin{equation} \label{eq:pi}
	\begin{split}
		p_{\varphi_i} = \ & \pr \Big\{ 
		\Big( \pa_l \in \paths \big( V (\pi_l) (0) \big) \Big)_{l \in 
[k_i]} \ :
		\\ & \quad
		\Big( \m, V \big[\pv_l \mapsto \pa_l \textrm{ for all } l \in [k_i] \big] \Big) \models \varphi_i \Big\},
	\end{split}
\end{equation}
where
\[
\quad \pvs_i = (\pv_{i1}, \ldots, \pv_{i k_i}), \quad k_i = \abs{\pvs_i}.	
\]
Again, we assume that each $\varphi_i$ is a bounded-time specification, as we did for \eqref{eq:3a_finite}.
This problem can be converted into the 
(multi-dimensional) HT problem in $\real^n$ by
\begin{equation} \label{eq:ht2}
H_0: \underline{p_\varphi} \in D, \quad H_1: \underline{p_\varphi} \in D^\mathrm{c}, 
\end{equation}
where $D$ is as defined in~\eqref{eq:smc2}, $D^\mathrm{c}$ is the 
complement of $D$, and $\underline{p_\varphi} = (p_{\varphi_1}, \ldots, 
p_{\varphi_n})$. 

We now propose a novel SPRT algorithm for this $n$-dimensional 
HT problem, by extending the common SPRT algorithm from \cref{sub:3a}
to multi-dimension.
By following the same idea, we first generalize the notion of \emph{indifference regions} to the multi-dimensional case.
Based on this, we propose a geometric condition to identify the two most 
indistinguishable cases $\underline{r}$ and $\underline{q}$
from the test regions $D$ and $D^\mathrm{c}$, such that 
it suffices to consider the HT problem:
\begin{equation} \label{eq:simple ht}
H_0': \underline{p_\varphi} = \underline{r}, \quad H_1': \underline{p_\varphi} = \underline{q}.
\end{equation}
Once $\underline{q}$ and $\underline{r}$
in~\eqref{eq:simple ht} are known,
then we can solve it in the same way
as done in \cref{sub:3a}.
Specifically, in~\eqref{eq:smc2},
for each $i \in [n]$, 
we draw $N$ sample path tuples 
for the path variable $\pvs_i$ from the DTMC
and let $T_i$ be the number of sample path tuples,
for which $\varphi_i$ is true.
Consider the log-likelihood function defined~as
\begin{equation} \label{eq:ll}
\lambda(\underline{x}) = \ln \Bigg( \prod_{i \in [n]} x_i^{T_i} (1 - x_i)^{N - T_i} \Bigg),
\end{equation}
where $\underline{x} = (x_1, \ldots, x_n) \in (0,1)^n$.
Clearly, $\lambda(\underline{r})$ and $\lambda(\underline{q})$ 
are the log-likelihood of the two hypotheses 
in~\eqref{eq:simple ht}, respectively. 
So, an SPRT algorithm can be constructed based on the log-likelihood 
ratio $\lambda(\underline{r}) - \lambda(\underline{q})$
(or equivalently $\lambda(\underline{q}) - \lambda(\underline{r})$).
Below, we explain how to derive 
$\underline{q}$ and $\underline{r}$.

\subsection*{Multi-Dimensional Indifference Region}

To ensure that we can find different values for 
$\underline{q}$ and $\underline{r}$  
in~\eqref{eq:simple ht} 
(so that $H_0$ and $H_1$ are not identical),
we introduce a multi-dimensional version 
of the indifference region assumption. 
It ensures that the two 
test regions in~\eqref{eq:ht2} are separated,
as formally stated below. 
This is similar to the case in \cref{sub:3a} 
(see \cref{ass:sprt_indiff} and \cref{rem:indifference}).

\begin{assumption} \label{ass:indifference}
\yw{The test region $D$ is convex}
and there exists convex $D_0, D_1 \subseteq [0,1]^n$, 
such that $D_0 \subseteq D \subset D_1$, 
and the Hausdorff distance $d_{\mathrm{H}} (D_0, D_1) > 0$,
where
$$d_{\mathrm{H}}(X, Y) = \max \Big\{\sup_{x \in X} \inf _{y \in Y} \nm{x - 
y}_2, \ \sup _{y \in Y} \inf _{x \in X} \nm{x - y}_2 \Big\}.$$
For simplicity, we assume that the boundaries of
$D_0$ and $D_1$ are respectively
defined  by 
the boundary equations
\begin{equation}
\label{eq:bounds}
  F_0(\underline{x}) = 0, ~~\text{and~} ~~F_1(\underline{x}) = 0,  
\end{equation}%
where $\underline{x} \in \real^n$, and $F_0$ and $F_1$ are elementary functions.%
\footnote{For example, if the boundary of $D_0$ is a circle 
of radius $0.2$ centered at $(0.5, 0.5)$,
then the elementary function $F_0(p_1, p_2) = (p_1-0.5)^2 + (p_2-0.5)^2 - 0.2^2$.}
\end{assumption}

\begin{figure}[!t]
	\centering
	\begin{tikzpicture}
	\node at (0.75, 0) {$D_0$};
	\node at (1.75, 0) {$D_1$};
	\node at (2.3, 0) {$D_1^\mathrm{c}$};
	\node at (1.25, 0.03) {$D$};
	\draw (0, 0) ellipse (1 and .5);
	\draw (0, 0) ellipse (2 and 1);
	\draw[dotted] (0, 0) ellipse (1.5 and .7);
	\draw[fill] (0, 0) ellipse (.03 and .03) node[left] 
{$\underline{\mle}$};
	\draw[fill] (.24, .48) ellipse (.03 and .03) node[left] 
{$\underline{r}$};
	\draw[fill] (.35, .98) ellipse (.03 and .03) node[right] 
{$\underline{q}$};
	\end{tikzpicture}
	\caption{Given the test region $D$, we assume there exists an indifference region formed by $D_1 \backslash D_0$.
	If $\mle$ from \eqref{eq:mle} satisfies
	$\mle \in D_0$, then we find $\underline{r} \in D_0$ 
	by \eqref{eq:r} and $\underline{q} \in D_1^\mathrm{c}$ by \eqref{eq:q}.}
	\label{fig:ht2}
\end{figure}

In general, there exist 
$D_0$ and $D_1$ such that $D_0 \subseteq D \subset D_1$,
when $\underline{p_\varphi}$ is not on the boundary 
of the test region $D$, i.e.,
$\underline{p_\varphi} \notin \partial D$.
Using \cref{ass:indifference}, we derive the HT problem for verifying~\eqref{eq:ht2}
\begin{equation} \label{eq:ht2_sep}
H_0: \underline{p_\varphi} \in D_0, \quad H_1: \underline{p_\varphi} \in D_1^\mathrm{c}, 
\end{equation}
where $\underline{p_\varphi} = (p_{\varphi_1}, \ldots, p_{\varphi_n})$. 
As illustrated in \cref{fig:ht2},
the region $D_1 \backslash D_0$ is the \emph{indifference region}, 
keeping $\underline{p_\varphi}$ statistically distinguishable from 
the boundary of the \emph{test region} $D$.
Again, the HT problem~\eqref{eq:ht2_sep} is \emph{composite}.

\begin{remark}
From \cref{ass:indifference}, if $\overline{D} = \overline{D'}$, 
where $\overline{D}$ and $\overline{D'}$ are respectively 
the closure of $D$ and $D'$, then verifying
$(\P^{\pvs_1} \varphi_1, \allowbreak \ldots, \P^{\pvs_n} \varphi_n) \in D$
is equivalent to verifying $(\P^{\pvs_1} \varphi_1, \ldots, \P^{\pvs_n} 
\varphi_n) \in D'$. Thus, they will not be differentiated in the rest of the 
paper.
\end{remark}

\begin{remark}
\yw{The condition that the test region $D$ 
is convex in \cref{ass:indifference} is only technical.}
If $D$ is non-convex, then we can divide $D$ into 
several convex subregions,
and convert the HT problem~\eqref{eq:ht2}
into several sub-problems with convex test regions. 
From example, the non-convex test region $D$ illustrated in
\cref{fig:nonconex} 
can be divided into the union of two convex
test regions $D_I$ and $D_{\mathit{II}}$.
Therefore, to test if $\underline{p_\varphi} \in D$,
it suffices to test if $\underline{p_\varphi} \in D_I$ or $\underline{p_\varphi} \in D_{II}$, and the overall statistical test error is the sum
of errors of these two sub-tasks.
\end{remark}

\begin{figure}[!t]
	\centering
	\includegraphics[width=1.5in]{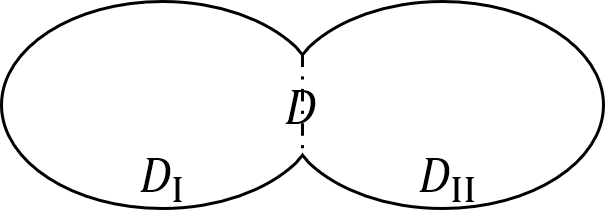}
	\caption{Partition of a non-convex test region.}
	\label{fig:nonconex}
\end{figure}

\subsection*{Identifying Most Indistinguishable Simple Hypotheses}

To solve the HT problem~\eqref{eq:ht2_sep}, 
suppose that we have drawn $N$ sample path tuples
for each path variable $\pvs_i$ ($i \in [n]$).
Let $T_i$ be the number of sample path tuples,
for which $\varphi_i$ is true. Then, we have
\begin{equation} \label{eq:samples}
	T_i \sim \bin(N, p_{\varphi_i}).
\end{equation}
The maximal likelihood estimator (MLE) 
of $\underline{p_\varphi}$ is 
\begin{equation} \label{eq:mle}
	\underline{\mle} = ( \mle_1, \ldots, \mle_n ) = \Big( \frac{T_1}{N}, \ldots, \frac{T_n}{N} \Big).
\end{equation} 
If $\underline{\mle} \in D_0$, then intuitively we should assert the hypothesis $H_0$ against $H_1$.
The statistical error of this assertion
can be measured by the likelihood ratio 
$\lambda(\underline{r}) - \lambda(\underline{q})$
for some $\underline{r} \in D_0$ 
and $\underline{q} \in D_1^\mathrm{c}$,
which will be decided below.
Specifically, to assert $H_0$ (or $H_1$) 
with certain desired FP and FN ratios,
the likelihood ratio should be 
greater (or less) than some threshold,
which is (only) a function of the given FP and FN ratios
(see \cref{sub:app_smc}).

As illustrated in \cref{fig:ht2}, we can identify
$\underline{q} \in D_1^\mathrm{c}$ by maximizing 
the likelihood for the (simple) hypothesis 
$\underline{p_\varphi} = \underline{q}$ 
for any $\underline{q} \in D_1^\mathrm{c}$. 
Intuitively, since any other simple hypothesis 
in $\underline{p_\varphi} \in D_1^\mathrm{c}$ 
yields a larger likelihood ratio,
to use SPRT to solve the HT problem~\eqref{eq:ht2_sep},
it suffices to only consider
the simple hypothesis
$\underline{p_\varphi} = \underline{q}$
from the composite hypothesis
$\underline{p_\varphi} \in D_1^\mathrm{c}$. 
This is formally stated below.

\begin{lemma} \label{lem:q}
If $\underline{\mle} \in D_0$,
to assert $H_0$ (against $H_1$) in the HT problem~\eqref{eq:ht2_sep}, 
it suffices to assert this $H_0$ (against $H_1'$) in the HT problem
\begin{equation} \label{eq:ht3}
H_0: \underline{p_\varphi} \in D_0, \quad H_1': \underline{p_\varphi} = \underline{q}, 
\end{equation}
where $\underline{q}$ is given by
\begin{equation} \label{eq:q}
	\underline{q} = \mathrm{argmax}_{\underline{x} \in D_1^\mathrm{c}} \lambda(\underline{x}),
\end{equation}
with $\lambda(\cdot)$ being the log-likelihood ratio given by~\eqref{eq:ll}.
\end{lemma}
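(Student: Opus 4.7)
My plan is to reduce the composite alternative $H_1$ in~\eqref{eq:ht2_sep} to a single worst-case simple hypothesis by exploiting the defining property of $\underline{q}$. Concretely, since $\underline{q} \in \mathrm{argmax}_{\underline{x} \in D_1^\mathrm{c}} \lambda(\underline{x})$ by~\eqref{eq:q}, we have $\lambda(\underline{q}) \geq \lambda(\underline{q}')$ for every $\underline{q}' \in D_1^\mathrm{c}$, which gives the pathwise monotonicity
\[
\lambda(\underline{r}) - \lambda(\underline{q}') \;\geq\; \lambda(\underline{r}) - \lambda(\underline{q}).
\]
Hence the SPRT termination condition $\lambda(\underline{r}) - \lambda(\underline{q}) > \ln\frac{1-\FN}{\FP}$ for asserting $H_0$ against $\underline{q}$ automatically implies the analogous condition against every $\underline{q}' \in D_1^\mathrm{c}$. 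The data-based decision region for asserting $H_0$ in the simple-vs-simple test against $\underline{q}$ is therefore contained in the decision region of every simple-vs-simple test against any alternative drawn from $D_1^\mathrm{c}$.

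The remaining task is to propagate the FP and FN guarantees from the simple test~\eqref{eq:ht3} to the composite test~\eqref{eq:ht2_sep}. For the FP ratio, the bound transfers directly, because both tests use the same data-driven decision rule and the distribution of the data under the null is unchanged. For the FN ratio, I would invoke a least-favorable-point argument: $\underline{q}$, being the maximum-likelihood element of $D_1^\mathrm{c}$, is precisely the element for which the random walk $\lambda(\underline{q}) - \lambda(\underline{r})$ has the smallest positive expected drift, so the FN probability is maximized at $\underline{q}$. Combined with the pathwise inclusion above, this yields the uniform bound
\[
\sup_{\underline{q}' \in D_1^\mathrm{c}} \pr\big(\text{assert }H_0 \mid \underline{p_\varphi} = \underline{q}'\big) \;\leq\; \FN,
\]
which is exactly the FN guarantee required for the composite test.

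The main obstacle will be justifying the least-favorable-point claim rigorously. Since $\underline{q}$ is itself a function of the running data, the test here is effectively a generalized likelihood ratio test rather than a fixed-hypothesis SPRT, so Wald's classical bounds cannot be applied verbatim to the composite FN. I would address this by exploiting the monotone-likelihood-ratio property of the product-Bernoulli family along the line segment from $\underline{r}$ to $\underline{q}$: moving the truth farther into $D_1^\mathrm{c}$ only decreases the probability that the log-likelihood walk crosses the $H_0$-asserting threshold, so the supremum over $D_1^\mathrm{c}$ is realized on $\partial D_1^\mathrm{c}$ at the data-dependent point $\underline{q}$, closing the argument.
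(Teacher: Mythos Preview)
Your core argument is essentially the paper's: from $\underline{q} = \mathrm{argmax}_{\underline{x}\in D_1^{\mathrm c}}\lambda(\underline{x})$ you get $\lambda(\underline{q})\ge\lambda(\underline{q}')$ for every $\underline{q}'\in D_1^{\mathrm c}$, hence the implication $\lambda(\cdot)-\lambda(\underline{q})>B \Rightarrow \lambda(\cdot)-\lambda(\underline{q}')>B$, so any sample realization that triggers ``assert $H_0$'' against $H_1'$ also triggers it against every simple alternative in $D_1^{\mathrm c}$. That is exactly the paper's proof of this lemma, stated almost verbatim.

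Where you diverge is in scope. The paper treats \cref{lem:q} as a purely \emph{pathwise} reduction: the decision region for asserting $H_0$ against $\underline{q}$ is contained in the decision region against any $\underline{q}'\in D_1^{\mathrm c}$, and it closes with a one-line remark that the statistical errors coincide because the same threshold is used. The quantitative FP/FN analysis is deferred to \cref{thm:sequential}, which handles the $D_0$ side via the KL-divergence minimizer $\underline{r}$. Your proposal bundles both steps into this lemma and adds a least-favorable-alternative argument for the FN side. That discussion is not wrong, and you are right to flag that $\underline{q}$ is data-dependent (a subtlety the paper does not address explicitly), but for the lemma as stated only your first paragraph is needed; the remainder belongs to the proof of \cref{thm:sequential}. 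One small notational point: at this stage $\underline{r}$ has not yet been introduced, so the paper writes the inequality with a generic $\underline{p_\varphi}\in D_0$ in place of your $\underline{r}$.
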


\begin{proof}
Given any possible value of 
$\underline{p_\varphi} \in D_0$,
for any $\underline{q'} \in D_1^\mathrm{c}$,
and any likelihood ratio threshold $B > 0$,
we have 
\[
	\lambda(\underline{p_\varphi}) - \lambda(\underline{q}) > B
	\implies
	\lambda(\underline{p_\varphi}) - \lambda(\underline{q'}) > B,
\]
where $\underline{q}$ is given by~\eqref{eq:q}. 
Thus, for given sample paths (from \eqref{eq:samples}),
if the SPRT algorithm asserts $H_0$ for the HT problem~\eqref{eq:ht3},
then it should also assert $H_0$ for the HT problem~\eqref{eq:ht2_sep}.
The two assertions have the same statistical errors because they use the same 
likelihood ratio threshold~$B$. 
\end{proof}

To obtain $\underline{q}$ from~\eqref{eq:q}, 
by the convexity of 
the test region $D_1$
and the function $\lambda(\cdot)$, 
the maximum is achieved at the boundary of $D_1$. That is, from~\eqref{eq:bounds} it holds that
\begin{equation} \label{eq:q_nc1}
	F_1(\underline{q}) = 0. 
\end{equation}
In addition, by the first-order condition of optimality under the constrained 
\eqref{eq:q_nc1}, the maximum of $\underline{q}$ is achieved when the direction 
of the gradient $\nabla \lambda(\underline{q})$ aligns with the normal vector 
$\nabla F_1(\underline{q})$ of the boundary. That is, for some $c \neq 0$ it holds that
\begin{equation} \label{eq:q_nc2}
\nabla F_1(\underline{q}) = c \nabla \lambda(\underline{q}) = \Big( 
\frac{c (\mle_i - q_i)}{q_i (1 - q_i)} \Big)_{i \in [n]}.
\end{equation}

Given $\underline{q}$ from \cref{lem:q}, 
we identify $\underline{r} \in D_0$ by minimizing 
the Kullback-Leibler divergence from the hypothesis
$\underline{p_\varphi} = \underline{r}$
to the hypothesis $\underline{p_\varphi} = \underline{q}$
for any $\underline{r} \in D_0$, 
as illustrated in \cref{fig:ht2}.
Generally, the Kullback-Leibler divergence measures 
the hardness of using SPRT to distinguish between 
two simple hypotheses
\cite{wald_SequentialTestsStatistical_1945}. 
Thus, to use SPRT to solve the HT problem~\eqref{eq:ht3},
it suffices to only consider the simple hypothesis
$\underline{p_\varphi} = \underline{r}$
from the composite hypothesis
$\underline{p_\varphi} \in D_0$. 
This is formally stated below.

\begin{lemma} \label{lem:r}
If $\underline{\mle} \in D_0$,
to assert $H_0$ (against $H_1'$) in the HT problem~\eqref{eq:ht3}, 
it suffices to assert this $H_0'$ (against $H_1'$) in the HT problem
\begin{equation} \label{eq:ht4}
H_0': \underline{p_\varphi} = \underline{r}, \quad H_1': \underline{p_\varphi} = \underline{q}, 
\end{equation}
where using $\underline{q}$ from \eqref{eq:q}, we have 
\begin{equation} \label{eq:r}
\underline{r} = \mathrm{argmin}_{\underline{x} \in D_0} K ( \underline{x} \Vert \underline{q}),
\end{equation}
where the Kullback-Leibler divergence is given by
\[
K ( \underline{x} \Vert \underline{q}) = 
\sum_{i \in [n]} x_i \ln \Big( \frac{x_i}{q_i} \Big) + (1 - x_i) \ln \Big( \frac{1 - x_i}{1 - q_i} \Big).	
\]
\end{lemma}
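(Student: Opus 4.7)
The plan is to mirror the strategy of the proof of Lemma~\ref{lem:q}, but to replace the argmax characterization of $\underline{q}$ with a Kullback--Leibler projection characterization of $\underline{r}$. The goal is to exhibit $\underline{r}$ as the ``least favorable'' point of $D_0$ against $\underline{q}$, so that any SPRT calibrated for the simple-vs-simple test~\eqref{eq:ht4} automatically enjoys the same statistical guarantees on the composite hypothesis $H_0$ from~\eqref{eq:ht3}.

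First, I would invoke the Pythagorean inequality for KL divergence (Csisz\'ar's I-projection theorem): since $D_0$ is convex by Assumption~\ref{ass:indifference} and, by~\eqref{eq:r}, the point $\underline{r}$ is the I-projection of $\underline{q}$ onto $D_0$, we have
\[
K(p \,\Vert\, \underline{q}) \;\ge\; K(p \,\Vert\, \underline{r}) + K(\underline{r} \,\Vert\, \underline{q})
\]
for every $p \in D_0$. Next, a direct Bernoulli calculation shows that under the true distribution $p$, the per-sample expected increment of the SPRT statistic $\lambda(\underline{q}) - \lambda(\underline{r})$, with $\lambda$ as in~\eqref{eq:ll}, equals $K(p \,\Vert\, \underline{r}) - K(p \,\Vert\, \underline{q})$. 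Combined with the Pythagorean inequality, this drift is at most $-K(\underline{r} \,\Vert\, \underline{q})$ for every $p \in D_0$, matching exactly the drift at $p = \underline{r}$. Informally, the log-likelihood random walk trends toward the lower (accept-$H_0'$) threshold at least as quickly under any $p \in D_0$ as it does under $\underline{r}$ itself.

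Third, I would conclude by a standard least-favorable-distribution argument for Wald's SPRT: via Wald's identity, or equivalently a Chernoff-tilting argument applied to the per-sample log-likelihood increment, the probability that the random walk crosses the upper threshold first is non-increasing as $p$ moves within $D_0$ away from $\underline{r}$. Consequently the worst-case Type-I error of the test over $D_0$ is attained at $p = \underline{r}$, so a test calibrated to achieve the target rate $\FP$ at $\underline{r}$ also satisfies $\FP$ uniformly on $D_0$; the Type-II error is unaffected because $H_1'$ is already simple. Hence asserting $H_0'$ in~\eqref{eq:ht4} suffices to assert $H_0$ in~\eqref{eq:ht3} within the prescribed error bounds.

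The main obstacle I anticipate is lifting the pointwise drift inequality in Step~2 to the stopping-probability inequality used in Step~3. Drift domination alone is not literally enough --- one needs either a stochastic-ordering statement, or monotonicity of the relevant moment-generating function of the Bernoulli log-likelihood increments as $p$ ranges over $D_0$. This is standard for Bernoulli families but carries the genuine technical weight of the argument, whereas the Csisz\'ar projection step itself is essentially algebraic.
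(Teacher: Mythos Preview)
Your proposal is sound and in fact more careful than the paper's own argument. The paper (which defers the proof to \cref{thm:sequential}) does not invoke the Pythagorean inequality at all: it simply uses the defining property of $\underline{r}$ as the argmin in~\eqref{eq:r} to obtain, for any $\underline{r'}\in D_0$,
\[
\ex_{\underline{p_\varphi}=\underline{r'}}\bigl(\lambda(\underline{r'})-\lambda(\underline{q})\bigr)=K(\underline{r'}\,\Vert\,\underline{q})\ \ge\ K(\underline{r}\,\Vert\,\underline{q})=\ex_{\underline{p_\varphi}=\underline{r}}\bigl(\lambda(\underline{r})-\lambda(\underline{q})\bigr),
\]
and then passes directly to the probability inequality $\pr_{\underline{r'}}(\lambda(\underline{r'})-\lambda(\underline{q})>B)\ge\pr_{\underline{r}}(\lambda(\underline{r})-\lambda(\underline{q})>B)$, concluding that $\underline{r}$ is least favorable. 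So where you compute the drift of the \emph{fixed} statistic $\lambda(\underline{r})-\lambda(\underline{q})$ under an arbitrary $p\in D_0$ and use Csisz\'ar projection to bound it, the paper instead compares the expectations of \emph{different} log-likelihoods $\lambda(\underline{r'})-\lambda(\underline{q})$, each evaluated at its own true parameter, and needs only the raw argmin inequality.

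The gap you flag in your final paragraph---that drift (expectation) domination does not by itself yield the stopping-probability inequality---is exactly the step the paper leaves implicit. Your Chernoff/stochastic-ordering route is the honest way to close it; the paper's version is shorter chiefly because it asserts the conclusion rather than deriving it. If you want to match the paper's level of detail, you can drop the Pythagorean step, state the expectation inequality straight from the argmin definition, and cite the standard SPRT monotonicity for Bernoulli families; if you want a fully rigorous proof, your outline is the right one.
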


\begin{proof}
We defer the proof to \cref{thm:sequential}.
\end{proof}

To solve $\underline{r}$ from~\eqref{eq:r}, by the convexity of 
the test region $D_0$
and the function $K ( \cdot \Vert \underline{q})$, 
the maximum is achieved at the boundary of $D_0$, i.e.,
\begin{equation} \label{eq:r_nc1}
	F_0(\underline{r}) = 0.
\end{equation}
In addition, by the first-order condition of optimality
under the constraint \eqref{eq:r_nc1},
the maximum of $\underline{r}$ is achieved 
when the direction of the gradient 
$\nabla_{\underline{r}} K ( \underline{r} \Vert \underline{q})$ 
aligns with the normal vector $\nabla F_0(\underline{r})$
of the boundary -- i.e., for some $c \neq 0$, it holds~that
\begin{equation} \label{eq:r_nc2}
	\nabla F_0(\underline{r}) =		
	\Big( c \big( \ln(\frac{r_i}{q_i}) - \ln(\frac{1 - r_i}{1 - q_i}) \big) \Big)_{i \in [n]}.
\end{equation}
When $d_{\mathrm{H}} (D_0, D_1) \to 0$, we have $\underline{r} - \underline{q} \to 0$, and~thus
\[
\nabla F_0(\underline{r}) \to \Big( \frac{ c (r_i - q_i) }{q_i (1 - q_i)} \Big)_{i \in [n]}.
\]

The case that $\underline{\mle} \in D_1$ can be handled in the same way.
As shown in \cref{fig:ht1}, we can first derive $\underline{r}$ 
in the same way as \eqref{eq:q} by 
\begin{equation} \label{eq:r'}
	\underline{r} = \mathrm{argmax}_{\underline{x} \in D_0} \lambda(\underline{x}),
\end{equation}
Then, using $\underline{r}$ from \eqref{eq:r'}, we derive $\underline{q}$
in the same way as \eqref{eq:r}; i.e.,
\begin{equation} \label{eq:q'}
	\underline{q} = \mathrm{argmin}_{\underline{x} \in D_1^\mathrm{c}} K ( \underline{x} \Vert \underline{r}).
\end{equation}

\begin{figure}[!t]
	\centering
	\begin{tikzpicture}
	\node at (0.75, 0) {$D_0$};
	\node at (1.75, 0) {$D_1$};
	\node at (2.3, 0) {$D_1^\mathrm{c}$};
	\node at (1.25, 0.03) {$D$};
	\draw (0, 0) ellipse (1 and .5);
	\draw (0, 0) ellipse (2 and 1);
	\draw[dotted] (0, 0) ellipse (1.5 and .7);
	\draw[fill] (0.5, 1.3) ellipse (.03 and .03) node[left] 
{$\underline{\mle}$};
	\draw[fill] (.24, .48) ellipse (.03 and .03) node[left] 
{$\underline{r}$};
	\draw[fill] (.35, .98) ellipse (.03 and .03) node[right] 
{$\underline{q}$};
	\end{tikzpicture}
	\caption{Given the indifference region formed by $D_1 \backslash D_0$, 
	if $\mle$ from \eqref{eq:mle} satisfies
	$\mle \in D_1^\mathrm{c}$, then we find $\underline{r} \in D_0$ 
	by \eqref{eq:r'} and $\underline{q} \in D_1^\mathrm{c}$ by \eqref{eq:q'}.}
	\label{fig:ht1}
\end{figure}

Thus, to achieve the FP and FN ratios $\FP$ and $\FN$, the SPRT algorithm 
should continue sampling until one of the following termination conditions is 
satisfied:
\begin{equation} \label{eq:sprt_comp}
	\begin{cases}
		\text{assert } H_0, 
		\text{ if } \underline{\mle} \in D_0 \text{ and } \lambda(\underline{r}) - \lambda(\underline{q}) > \ln \frac{1 - \FN}{\FP}
		\\ 
		\text{assert } H_1, 
		\text{ if } \underline{\mle} \in D_1^\mathrm{c} \text{ and } \lambda(\underline{q}) - \lambda(\underline{r}) > \ln \frac{1 - \FP}{\FN}
	\end{cases}
\end{equation}
The above discussion is summarized by \cref{thm:sequential} and \cref{alg:non-nested}.

\begin{algorithm}[!t]
\caption{\footnotesize SMC of $(\m, V) \models (\P^{\pvs_1} \varphi_1, \ldots, \P^{\pvs_n} \varphi_n) \in D$.}
\label{alg:non-nested}
\begin{algorithmic}[1]
\Require Desired FP/FN ratios $\FP$/$\FN$, test regions $D_0, D_1$.

\State $N \gets 0$, $T_i \gets 0, \forall i \in [n]$.

\While{True}

\State $N \gets N + 1$.

\For{$i \in [n]$}

\State Draw a tuple of sample paths $\underline{\pa_i}$.

\If{$\varphi_i$ is true on $\underline{\pa_i}$}

\State $T_i \gets T_i + 1$.

\EndIf

\EndFor

\State Compute $\underline{\mle}$ by~\eqref{eq:mle}.

\If{$\underline{\mle} \in D_0$}

\State Compute $\underline{q}$ and $\underline{r}$ by~\eqref{eq:q},~\eqref{eq:r} (via \eqref{eq:q_nc1},~\eqref{eq:q_nc2},~\eqref{eq:r_nc1}, and~ \eqref{eq:r_nc2}).

\ElsIf{$\underline{\mle} \in D_1^\mathrm{c}$}

\State Compute $\underline{r}$ and $\underline{q}$ by~\eqref{eq:r'},~\eqref{eq:q'}.

\Else

\State Continue.

\EndIf

\State Compute $\lambda (\underline{q})$ and $\lambda (\underline{r})$ by \eqref{eq:ll}.

\State Check the termination condition~\eqref{eq:sprt_comp}.

\EndWhile

\end{algorithmic}
\end{algorithm}

\begin{theorem} \label{thm:sequential}
    Under \cref{ass:indifference},
	\cref{alg:non-nested} terminates with probability $1$, and its FP and FN ratios are no greater than $\FP$ and $\FN$.
\end{theorem}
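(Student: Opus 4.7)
The plan is to first settle the deferred \cref{lem:r} and then combine it with \cref{lem:q} and Wald's classical simple-vs-simple SPRT bounds to obtain both the error guarantees and almost-sure termination. For \cref{lem:r}, I would argue that for any candidate simple hypothesis $\underline{p_\varphi} = \underline{r}' \in D_0$, the expected per-sample increment of the log-likelihood ratio $\lambda(\underline{r}') - \lambda(\underline{q})$ equals the Kullback--Leibler divergence $K(\underline{r}' \Vert \underline{q})$. Hence the point $\underline{r}$ that minimizes $K(\cdot \Vert \underline{q})$ over $D_0$ is the slowest to drive the log-likelihood ratio across the $H_0$-acceptance threshold, so a pathwise/martingale domination argument shows that the FN rate of the composite-vs-simple test \eqref{eq:ht3} is upper-bounded by the FN rate of the simple-vs-simple test \eqref{eq:ht4}.

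Next I would bound the FP/FN ratios of \cref{alg:non-nested}. Applying the classical Wald analysis to the simple-vs-simple test between $\underline{r}$ and $\underline{q}$, the likelihood-ratio process is a martingale under each simple hypothesis, so the stopping rule \eqref{eq:sprt_comp} yields Type-I and Type-II errors of at most $\FP$ and $\FN$. I would then transfer these bounds to the original composite-vs-composite test \eqref{eq:ht2_sep} by invoking \cref{lem:q} (to reduce the alternative $H_1$ to the simple hypothesis $\underline{p_\varphi} = \underline{q}$ without worsening the Type-I error) and the preceding argument for \cref{lem:r} (to reduce the null $H_0$ to $\underline{p_\varphi} = \underline{r}$ without worsening the Type-II error). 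The symmetric case $\underline{\mle} \in D_1^\mathrm{c}$ is handled by \eqref{eq:r'}--\eqref{eq:q'} in the obvious dual way.

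For almost-sure termination, I would invoke the strong law of large numbers: since the sample path tuples are i.i.d., $\underline{\mle} \to \underline{p_\varphi}$ almost surely. Under \cref{ass:indifference}, the separation $d_{\mathrm{H}}(D_0, D_1) > 0$ guarantees that the asymptotic limits of the adaptively chosen pair $(\underline{r}, \underline{q})$ are distinct on the set where $\underline{p_\varphi} \notin \partial D_0 \cup \partial D_1$. Consequently, the conditional mean of each log-likelihood increment is a strictly nonzero constant once $\underline{\mle}$ has settled near its limit, and a standard random-walk drift argument combined with Borel--Cantelli shows that the stopping thresholds in \eqref{eq:sprt_comp} are crossed in finitely many steps with probability one.

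The main obstacle, in my view, is the adaptive nature of $\underline{r}$ and $\underline{q}$: because they depend on $\underline{\mle}$ and therefore on the observed sample history, the usual Wald SPRT analysis for a fixed pair of simple hypotheses does not apply verbatim. I would handle this by partitioning the analysis into a transient phase in which $\underline{\mle}$ may cross $\partial D_0$ or $\partial D_1$ finitely often, and an asymptotic phase in which $\underline{\mle}$ is bounded away from these boundaries so that $\underline{r}, \underline{q}$ stabilize and the classical martingale analysis applies without loss.
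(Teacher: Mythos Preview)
Your proposal follows essentially the same route as the paper: reduce the composite test to a simple-vs-simple test by identifying the KL-minimizing point $\underline{r}$ via the identity $\ex_{\underline{p_\varphi}=\underline{r}'}\big[\lambda(\underline{r}')-\lambda(\underline{q})\big]=K(\underline{r}'\Vert\underline{q})$, invoke Wald's SPRT error bounds on the resulting pair $(\underline{r},\underline{q})$, and use a law-of-large-numbers/drift argument for almost-sure termination. If anything you are more careful than the paper's proof, which does not explicitly address the data-dependence of $(\underline{r},\underline{q})$ that you correctly flag as the main technical obstacle.
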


\begin{proof}
Without loss of generality, we consider 
$\underline{p_\varphi} \in D_0$; 
the same applies to the case 
$\underline{p_\varphi} \in D_1^\mathrm{c}$. 
By the central limit theorem,
as the number of samples increases ($N \to \infty$), 
the probability that $\mle \in D_1 \backslash D_0$
converges to $0$.
In addition, the expected value of the log-likelihood ratio 
$\ex (\lambda(\underline{r}) - \lambda(\underline{q})) \to \infty$,
and thus the probability that \eqref{eq:sprt_comp} is not yet satisfied,
converges to $0$.
Therefore, \cref{alg:non-nested} terminates
with probability $1$.

Now, we prove the FP and FN ratios of \cref{alg:non-nested}.
By~\eqref{eq:r}, for any $\underline{r'} \in D_0$, 
the expectation of the log-likelihood ratio 
satisfies
\[
\begin{split}
&
\ex_{\underline{p_\varphi} = \underline{r'}} 
\big( \lambda(\underline{r'}) - \lambda(\underline{q}) \big)
= K ( \underline{r'} \Vert \underline{q}) 
\\ & \qquad 
\geq K ( \underline{r} \Vert \underline{q}) 
= \ex_{\underline{p_\varphi} = \underline{r}} (\lambda(\underline{r}) - \lambda(\underline{q})),
\end{split}
\]
for $K$ from~\eqref{eq:r}.
Therefore, for any $B > 0$, we have 
\[
\pr_{\underline{p_\varphi} = \underline{r'}} (\lambda(\underline{r'}) - \lambda(\underline{q}) > B) \geq \pr_{\underline{p_\varphi} = \underline{r}} (\lambda(\underline{r}) - \lambda(\underline{q}) > B).
\]
This implies that for any possible value of $\underline{p_\varphi} \in D_0$, 
the probability of asserting $H_0$ by~\eqref{eq:sprt_comp} 
using the SPRT is not less than that of $\underline{p_\varphi} = \underline{r}$, which is $1-\FP$.
Therefore, \cref{lem:r} and \cref{thm:sequential} hold. 
\end{proof}

\begin{remark}
For computing~\eqref{eq:q} and \eqref{eq:r}, one can either use optimization or 
solve it via the necessary conditions~\eqref{eq:q_nc1},~\eqref{eq:q_nc2},~\eqref{eq:r_nc1}, and \eqref{eq:r_nc2}, which may have analytic solutions as 
the boundary functions $F_0 (\cdot)$ and $F_1 (\cdot)$ are elementary functions 
(especially when $F_0 (\cdot)$ and $F_1 (\cdot)$ are linear functions). 
Since solving the optimization problem at every iteration can be inefficient for some cases this, we can reduce the frequency of computing the significance level by drawing samples in batches.
\end{remark}

\subsection{Nested probabilistic quantification}

The nested probabilistic quantification in \hpctls can be handled in the same way as~\cite{wzbp19}.
Thus, the nested probability operators in~\eqref{eq:smc3} can be handled in the 
same way as done in~\cite{wzbp19}, and we omit describing it here.

\section{Case Studies and Evaluation}
\label{sec:simulation}

We evaluated the presented SMC algorithms on the case studies
described in \cref{sec:applications}. It is important to highlight that all 
these \hpctls specifications are currently not verifiable by existing 
probabilistic model checkers and SMC tools.
The simulations were performed on a laptop with Intel\textregistered\
Core\texttrademark\ i7-7820HQ, 2.92GHz Processor with 32GB RAM.
The simulation code is available at~\cite{gitlab_hpctls}. 
The assertions of the proposed SMC algorithms are compared with ``the correct
answers'', which are derived by extensive simulations or exhaustive solutions.
These ``the correct answers'' are also used to check 
the validity of the indifference region assumption on the case studies.
The running time, number of samples, and the accuracy of the proposed
algorithms (Number of correct assertions / Number of total assertions) are
estimated based on $100$ runs for each SMC task.
The results are presented in \cref{tb:sc,tb:pn,tb:dc,tb:cache}, respectively.
In all the setups, the estimated accuracy agrees with the fixed desired 
significance levels ($\FP = \FN = 0.01$), except for one case in \cref{tb:dc}.
This is because of the statistical error of the estimated accuracy using 
only $100$ runs. 
The average execution time in the worst case is less than~$30$~seconds.

\subsection{Side-channel Vulnerability}
We verified the correctness of the \hpctls specification~\eqref{eq:sc_vul} on 
GabFeed chat server~\cite{GabFeed}. The authentication algorithm in this version 
of GabFeed has been reported to have a side channel vulnerability that leaks the 
number of set bits in the secret key~\cite{tizpaz2018data}. 
The vulnerability can be exploited by the attacker by observing the execution time across different public keys, as discussed in \cref{sec:applications}; hence, as with~\cite{tizpaz2018data},
we verify the security policy~\eqref{eq:sc_vul} for a selection of security keys.
We instrumented the source code to obtain the execution time for a combination of the secret key and public key, and generate a trace in a discrete-time fashion. 
For a given secret key, we select a random public key and generate a trace from 
it. Using this approach we were able to show the existence of side-channel -- i.e., the negation of~\eqref{eq:sc_vul} holds with confidence level 
$0.99$.
The results are shown in \cref{tb:sc}.

\setlength{\tabcolsep}{12pt}

\begin{table*}[h]
\centering
{
\begin{tabular}{rrrrrr}
\toprule
   $\tau$ &  $\epsilon$& $\delta$ & Acc. & No. Samples & Time (s) \\
\midrule

 60  &     0.05  &     0.01  &     1.00 &     5.5e+02 &     0.54 \\
 60  &     0.05  &     0.001 &     1.00 &     5.5e+03 &     5.76 \\
 60  &     0.1   &     0.01  &     1.00 &     6.1e+02 &     0.60 \\
 60  &     0.1   &     0.001 &     1.00 &     6.2e+03 &     7.16 \\
 90  &     0.05  &     0.01  &     1.00 &     3.7e+02 &     0.46 \\
 90  &     0.05  &     0.001 &     1.00 &     3.7e+03 &     4.94 \\
 90  &     0.1   &     0.01  &     1.00 &     4.1e+02 &     0.48 \\
 90  &     0.1   &     0.001 &     1.00 &     4.1e+03 &     5.37 \\
 120 &     0.05  &     0.01  &     1.00 &     3.8e+02 &     6.96 \\
 120 &     0.05  &     0.001 &     1.00 &     2.2e+03 &     11.24 \\
 120 &     0.1   &     0.01  &     1.00 &     3.8e+02 &     6.05 \\
 120 &     0.1   &     0.001 &     1.00 &     2.3e+03 &     9.46 \\

\bottomrule
\end{tabular}
}
\caption{Showing the violation of timing side channel vulnerability for 
different combinations of time thresholds $\tau$ seconds, approximate 
equivalence parameter $\epsilon$ and indifference region $\delta$ based on the 
average of $100$ runs.\label{tb:sc}}
\end{table*}


\subsection{Probabilistic Noninterference} We showed
the violation of specification~\eqref{eq:def_pn}
for $N \in \{20, 50, 100\}$ threads (the results are similar for $l=1$). The obtained results are presented in \cref{tb:pn}.
The total number of states of the DTMC is at least $N !$, so we simulate it using a transition-matrix-free approach to meet the memory constraint.
As the significance level decreases, namely a more accurate assertion is asked for, the sample cost and the running time increase accordingly.

\begin{table*}[h]
\centering
{
\begin{tabular}{rrrrr}
\toprule
   $N$ &  $\delta$ & Acc. & No. Samples & Time (s) \\
\midrule
20  & 0.01   &    1.00 &    7.7e+02 &    0.49 \\
20  & 0.001  &    1.00 &    7.6e+03 &    6.45 \\
50  & 0.01   &    1.00 &    7.0e+02 &    0.48 \\
50  & 0.001  &    1.00 &    6.8e+03 &    6.39 \\
100 & 0.01   &    1.00 &    6.5e+02 &    0.54 \\
100 & 0.001  &    1.00 &    6.6e+03 &    7.10 \\
\bottomrule
\end{tabular}
}
\caption{Showing the violation of probabilistic noninterference for different 
combinations of number of threads $N$ and  indifference region $\delta$, based 
on the average of~$100$~runs.\label{tb:pn}}
\end{table*}

\subsection{Security of Dining Cryptographers} We verified the correctness of 
the
specification~\eqref{eq:def_dc} with $i = 1, j = 2$ on the model provided
by~\cite{_PRISMCaseStudies_} for $N \in \{100, 1000\}$ cryptographers
and approximate equivalence parameter $\epsilon \in \{0.2, 0.1, 0.0.5\}$. The obtained results are  summarized in \cref{tb:dc}.
The total number of states of the DTMC is at least $2^N$, and we simulate it with a transition-matrix-free approach.
As the approximate equivalence parameters increases, the specification is increasingly relaxed, so the sample cost and the running time decrease accordingly.

\begin{table*}[h]
\centering
{
\begin{tabular}{rrrrrr}
\toprule
   $N$ & $\epsilon$ & Acc. & No. Samples & Time (s) \\
\midrule
 100  &    0.05 &    1.00 &    1.0e+03 &     0.91 \\
 100  &    0.1  &    1.00 &    5.2e+02 &     0.39 \\
 100  &    0.2  &    1.00 &    2.8e+02 &     0.14 \\
 1000 &    0.05 &    \textbf{0.98} &    1.1e+03 &     3.27 \\
 1000 &    0.1  &    1.00 &    5.5e+02 &     1.52 \\
 1000 &    0.2  &    1.00 &    2.8e+02 &     0.69 \\
\bottomrule
\end{tabular}
}
\caption{Verifying the security of dining cryptographers for different 
combinations of number of cryptographers $N$ and approximate equivalence 
parameter $\epsilon$ for indifference region $\delta = 0.01$, based on the 
average of $100$ runs.\label{tb:dc}}
\end{table*}

\subsection{Randomized cache Replacement Policy}
We verified the correctness of the specification~\eqref{eq:def_rcr} 
for the performance of random replacement cache policy described in \cref{sec:applications}.
The performance of random replacement policy is evaluated on random memory accesses from a normal distribution with variance less than the cache size, to emulate the \emph{locality of reference}.
With the random replacement policy and the random access sequence,
the dynamics of the cache modeled by the Mealy machine described in \cref{sec:applications} can be captured by a DTMC.

We consider the paths of the DTMC with labels $\mathtt{H}$ or $\mathtt{M}$, depending on the outcome of the cache access. 
We compared the probability of all hits 
to the probability of seeing a single miss $\mathtt{M}$
on a fully associative cache with $256$ lines 
for a program of $1024$ blocks. 
This can easily be extended to set associative cache with arbitrary program size. 
The results are shown in \cref{tb:cache}.  
We observe that the algorithm takes longer time 
for $T=20$ than $T=10$. 
This is because, for shorter $T$, the probability of observing
all hits $\mathtt{H}$ is more than the probability of observing a miss $\mathtt{M}$. 
As the trace length increases, these probabilities become closer.
%


\begin{table*}[!t]
\centering
{
\begin{tabular}{rrrrrr}
\toprule
$T$ & $\epsilon$ & $\delta$ & Acc. & No. Samples & Time (s) \\
\midrule
10 & 0.05 & 0.01 & 1.00 & 1.1e+02 & 0.13 \\
10 & 0.05 & 0.001 & 1.00 & 1.0e+03 & 2.56 \\
10 & 0.01 & 0.01 & 1.00 & 1.2e+02 & 0.14 \\
10 & 0.01 & 0.001 & 1.00 & 1.2e+03 & 2.79 \\
20 & 0.05 & 0.01 & 1.00 & 6.0e+02 & 1.49 \\
20 & 0.05 & 0.001 & 1.00 & 6.2e+03 & 16.73 \\
20 & 0.01 & 0.01 & 0.99 & 1.2e+03 & 2.97 \\
20 & 0.01 & 0.001 & 1.00 & 1.1e+04 & 28.99 \\
\bottomrule
\end{tabular}
}
\caption{Verifying the performance of random replacement policy for different combinations of trace length ($T$) and approximation parameter ($\epsilon$) and indifference region ($\delta$), based on the average of $100$ runs.\label{tb:cache}}
\end{table*}
\section{Related Work}
\label{sec:related}

To the best of our knowledge, the only existing SMC algorithm for hyper temporal logics is the one proposed in~\cite{wzbp19}.
It handles complex probabilistic quantifications
similar to \hpctls but using a multi-dimensional extension of
Clopper-Pearson confidence interval, whereas, in this paper, our focus is on 
SPRT. Moreover, the application domain of~\cite{wzbp19} is on timed 
hyperproperties and cyber-physical systems, whereas, here, we 
concentrate on applications in information-flow security. 
This algorithm provides provable probabilistic guarantees
for any desired false positive $\FP \in (0,1)$
(the probability of wrongly claiming a false formula to be true)
and false negative $\FN \in (0,1)$
(the probability of wrongly claiming a true formula to be false).

Randomization is used in different contexts to quantify the amount 
of information leak as well as to provide probabilistic guarantees about the correctness of security policies. A classic example is probabilistic 
noninterference~\cite{gray90,g92}, which requires that high-security input 
should not change the probability of reaching low-security
outputs. There has been extensive work in this area including using probabilistic bisimulation to reason about probabilistic noninterference in multi-threaded programs~\cite{ss00}. Another prominent line of work is {\em quantitative 
information flow}~\cite{s09,kb07}, which relates information theory to 
independent executions of a system and uses different notions of entropy to 
quantify the amount information leaked across different executions. 

Recently, there has been significant progress in automatically
{verifying}~\cite{frs15,fmsz17,fht18,cfst19}
and {monitoring}~\cite{ab16,fhst19,bsb17,bss18,fhst18,sssb19,hst19} 
\hltl specifications. \hltl is also supported by a growing set of 
tools, including the model checker MCHyper~\cite{frs15,cfst19}, the  
satisfiability checkers EAHyper~\cite{fhs17} and MGHyper~\cite{fhh18}, and the runtime monitoring tool RVHyper~\cite{fhst18}. Synthesis techniques for 
\hltl are studied in~\cite{fhlst18} and in~\cite{bf19}.

\section{Conclusion} \label{sec:conclusion}

In this paper, we studied the problem of statistical model checking (SMC) 
of hyperproperties on discrete-time Markov chains (DTMCs).
First, to reason about probabilistic hyperproperties, we introduced the
probabilistic temporal logic \hpctls that extends \pctls by allowing explicit
and simultaneous quantification over paths. 
%
%
In addition, we proposed an SMC algorithm for \hpctls specifications on DTMCs.
Unlike existing SMC algorithms for hyperproperties based on Clopper-Pearson 
confidence interval, we proposed sequential probability ratio tests (SPRT) 
with a new notion of indifference margin. 
Finally, we evaluated our SMC algorithms on four case studies:
time side-channel vulnerability in encryption, 
probabilistic anonymity in dining cryptographers, 
probabilistic noninterference of parallel programs,
and the performance of a random cache replacement policy.

For future work, we are currently developing SMC algorithms for verification of 
timed hyperproperties in probabilistic systems. Another interesting research 
avenue is developing exhaustive model checking
algorithms for \hpctls. One can also develop symbolic techniques for
verification of \hpctls specifications. We also note that our approach has the 
potential of being generalized to reason about the conformance of two systems 
(e.g., an abstract model and its refinement) with respect to hyperproperties.

\section*{Acknowledgment}
\addcontentsline{toc}{section}{Acknowledgment}
This work is sponsored in part by the ONR under agreements 
N00014-17-1-2504 and N00014-20-1-2745, AFOSR under award number 
FA9550-19-1-0169, as well as the NSF CNS-1652544 and NSF SaTC-1813388 grant.

\bibliographystyle{IEEEtranS}
\bibliography{ref}


\appendix

We recap the temporal logics 
relevant to this paper with the notations adapted to that of \hpctls,
and introduce the basics on the statistical model checking of \pctls
using the sequential probability ratio test (SPRT).


\subsection{PCTL$^*$} \label{sub:app_pctls}

\paragraph{Syntax} The syntax 
of \pctls~\cite{baier_PrinciplesModelChecking_2008} consists of \emph{state} 
formulas $\Phi$ and \emph{path} formulas $\varphi$ that are defined respectively 
over the set of atomic propositions $\aps$ by: 
\[
  \Phi \Coloneqq \ \ap 
  \ \vert \ \neg \Phi
  \ \vert \ \Phi \land \Phi
  \ \vert \ \P^{J} (\varphi)
\]
and
\[
  \varphi \Coloneqq \ \Phi 
  \ \vert \ \neg \varphi
  \ \vert \ \varphi \land \varphi
  \ \vert \ \X \varphi
  \ \vert \ \varphi \U^{\leq k} \varphi
\]
where $\ap \in \aps$, and $J \subseteq [0, 1]$ is an interval with rational~bounds.

\paragraph{Semantics}
The satisfaction relation $\models$ of the \pctls state and path formulas is defined for a state and a path of a labeled DTMC $\m$ respectively by 
\[
\begin{array}{l@{\hspace{1em}}c@{\hspace{1em}}l}
(\m, \ms) \models \ap & \textrm{iff} & \ap \in \ml (\ms)
\\
(\m, \ms) \models \neg \Phi & \textrm{iff} & (\m, \ms) \not\models \Phi
\\
(\m, \ms) \models \Phi_1 \land \Phi_2 &  \textrm{iff} & (\m, \ms) \models \Phi_1 \textrm{ and } (\m, \ms) \models \Phi_2
\\
(\m, \ms) \models \P^{J} (\varphi) & \textrm{iff} 
&
\pr \big( (\m, \ms) \models \varphi \big) \in J 
\end{array}
\]
and
\[
\begin{array}{l@{\hspace{0.5em}}c@{\hspace{0.5em}}l}
(\m, \pa) \models \Phi & \textrm{iff} & (\m, \pa(0)) \models \Phi
\\
(\m, \pa) \models \neg \varphi & \textrm{iff} & (\m, \pa) \not\models \varphi
\\
(\m, \pa) \models \varphi_1 \land \varphi_2 &  \textrm{iff} & (\m, \pa) \models \varphi_1 \textrm{ and } (\m, \pa) \models \varphi_2
\\
(\m, \pa) \models \X \varphi & \textrm{iff} & (\m, \pa^{(1)}) \models \varphi
\\
(\m, \pa) \models \varphi_1  \U^{\leq k}  \varphi_2  & \textrm{iff} &
\textrm{there exists } i \leq k \textrm{ such that } 
 \\ & & 
 \big((\m, \pa^{(i)}) \models \varphi_2 \big) \land
 \\ & & 
 \big(\text{for all } j < i, (\m, \pa^{(j)}) \models 
\varphi_1 \big)
\end{array}
\]
where $\pa^{(i)}$ is the $i$-suffix of path $\pa$.

\subsection{HyperLTL}

\paragraph{Syntax} HyperLTL~\cite{clarkson_TemporalLogicsHyperproperties_2014} 
formulas are defined over the set of atomic propositions $\aps$ respectively by:
\[
  \psi \Coloneqq \ 
  \exists \pv. \ \psi
  \ \vert \ \forall \pv. \ \psi
  \ \vert \ \varphi
\]
and
\[
  \varphi \Coloneqq \ 
  \ap^\pv 
  \ \vert \ \neg \varphi
  \ \vert \ \varphi \land \varphi
  \ \vert \ \X \varphi
  \ \vert \ \varphi \U \varphi
\]
where $\ap \in \aps$.

\paragraph{Semantics} The semantics of 
\hltl is defined for a trace assignment $V: \Pi \to (\nat \to 2^\aps)$ by:
\[
  \begin{array}{l@{\hspace{1em}}c@{\hspace{1.2em}}l}
    V \models \ap^\pv & \textrm{iff} & \ap \in \ml (V(\pi(0)))
    \\
    V \models \exists \pv. \ \psi & \textrm{iff} & \text{there
    exists } \pa \in T 
    \\ & &
    \text{ such that } V[\pv \mapsto \pa] \models \psi
    \\
    V \models \forall \pv. \ \psi & \textrm{iff} & \text{for all } \pa \in T 
    \\ & & 
    \text{ such that } V[\pv \mapsto \pa] \models \psi
    \\
    V \models \neg \varphi & \textrm{iff} & V \not\models \varphi
    \\
    V \models \varphi_1 \land \varphi_2 &  \textrm{iff} & V \models 
\varphi_1 \textrm{ and } V \models \varphi_2
    \\
    V \models \X \varphi & \textrm{iff} & V^{(1)} \models \varphi
    \\
    V \models \varphi_1  \U  \varphi_2  & \textrm{iff} &
    \textrm{there exists } i\geq 0 \textrm{ such that } 
    \\ & & 
    \big(T,V^{(i)} \models \varphi_2 \big) \land 
    \\ & & 
    \big(\text{for all } j < i, \text{ we have } V^{(j)} \models 
\varphi_1 \big)
  \end{array}
\]
where $V^{(i)}$ is the $i$-shift of path assignment $V$, defined by 
$V^{(i)}(\pv) = (V(\pv))^{(i)}$.

\subsection{HyperPCTL}

\paragraph{Syntax} \hpctl~\cite{ab18} formulas 
are defined over the set of atomic propositions $\aps$ respectively by: 
\[
  \psi \Coloneqq \ 
  \ap_\sv
  \ \vert \ \exists \sv. \ \psi
  \ \vert \ \forall \sv. \ \psi
  \ \vert \ \neg \psi
  \ \vert \ \psi \land \psi 
  \ \vert \ p \Join p
\]
\[
  p \Coloneqq \ 
  \P (\varphi) 
  \ \vert \ c
  \ \vert \ p + p
  \ \vert \ p - p
  \ \vert \ p \cdot p
\]
\[
  \varphi \Coloneqq \ 
  \X \psi
  \ \vert \ \psi \U^{\leq k} \psi
\]
where $\ap \in \aps$, $c \in \mathbb{Q}$ and $\Join \in \{<,>,\leq,\geq,=\}$.

\paragraph{Semantics} The satisfaction relation $\models$ of \hpctl is defined 
for state and path formulas of a labeled DTMC $\m$ respectively by: 
\[
\begin{array}{l@{\hspace{1em}}c@{\hspace{1.2em}}l}
(\m, X) \models \ap_\sv & \textrm{iff} & \ap \in X (\sv)
\\
(\m, X) \models \exists \sv. \ \psi & \textrm{iff} & \text{there
exists } \ms \in \mss 
\\ & & 
\text{ such that } X[\sv \mapsto \ms] \models \psi
\\
(\m, X) \models \forall \sv. \ \psi & \textrm{iff} & \text{for all } \ms \in 
\mss 
\\ & &
\text{ such that } X[\sv \mapsto \ms] \models \psi
\\
(\m, X) \models \neg \psi & \textrm{iff} & (\m, X) \not\models \psi
\\
(\m, X) \models \psi_1 \land \psi_2 &  \textrm{iff} & (\m, X) \models \psi_1 
\\ & &
\textrm{ and } (\m, X) \models \psi_2
\\
(\m, X) \models p_1 \Join p_2 & \textrm{iff} & \llbracket p_1 \rrbracket_{(\m, X)} \Join \llbracket p_1 \rrbracket_{(\m, X)}
\end{array}
\]

\[
\begin{array}{l@{\hspace{0.5em}}c@{\hspace{0.8em}}l}
  \llbracket c \rrbracket_{(\m, X)} & = & c   
  \\
  \llbracket p_1 + p_2 \rrbracket_{(\m, X)} & = & \llbracket p_1 \rrbracket_{(\m, X)} + \llbracket p_2 \rrbracket_{(\m, X)}
  \\
  \llbracket p_1 - p_2 \rrbracket_{(\m, X)} & = & \llbracket p_1 \rrbracket_{(\m, X)} - \llbracket p_2 \rrbracket_{(\m, X)}  
  \\
  \llbracket p_1 \cdot p_2 \rrbracket_{(\m, X)} & = & \llbracket p_1 \rrbracket_{(\m, X)} \cdot \llbracket p_2 \rrbracket_{(\m, X)}  
  \\
  \llbracket \P (\varphi) \rrbracket_{(\m, X)}
  & = & \pr \big\{ (\pa_i \in \paths(X(\pv_i))_{i \in [n]} \mid 
  \\ & & \quad (M, \underline{\pa}) \models \varphi \big\} 
\end{array}
\]

\[
  \begin{array}{l@{\hspace{0.5em}}c@{\hspace{0.8em}}l}
    (M, \underline{\pa}) \models \X \varphi & \textrm{iff} & (M, \underline{\pa}(1)) \models \varphi
    \\
    (M, \underline{\pa}) \models \varphi_1  \U^{\leq k}  \varphi_2  & \textrm{iff} &
    \textrm{there exists } i \leq k \textrm{ such that } 
    \\ & & \big((M, 
    \underline{\pa}(i)) \models \varphi_2 \big) \land 
    \\ & & \big(\text{for all } j < i, (M, \underline{\pa}(i)) 
\models \varphi_1 \big)
  \end{array}
\]
where $\underline{\pa} = (\pa_1, \ldots, \pa_n)$ and $\underline{\pa}(i) = (\pa_1(i), \ldots, \pa_n(i))$.

\subsection{Statistical Model Checking of \pctls} \label{sub:app_smc}

The key issue in the statistical model checking of \pctls
is to deal with the probabilistic operator by sampling
\cite{larsen_StatisticalModelChecking_2016,agha_SurveyStatisticalModel_2018}.
Specifically, consider the satisfaction of
a \pctls formula
\begin{equation} \label{eq:app_smc}
	(\m, \ms) \models \P^{[0, p]} (\varphi)
\end{equation}
where $\varphi$ is a
linear temporal logic formula
and $p \in (0,1)$ is a given real number.
From the semantics of \pctls in \cref{sub:app_pctls},
it means that the satisfaction probability 
\begin{equation} \label{eq:app_pphi}
	p_\varphi = \pr \big( (\m, \ms) \models \varphi \big)
\end{equation}
of $\varphi$ for a given model $\m$
with the initial state $\ms$ satisfies
$$p_\varphi \leq p$$
For simplicity, we assume $\varphi$
is bounded-time and contains no 
probabilistic operator, thus its truth value
can be decided on
finite-length sample paths of $\m$.
Unbounded-time non-nested formulas can be handled
similarly with extra considerations.

To statistically infer \eqref{eq:app_smc}, the assumption of 
an \emph{indifference region} is usually adopted
\cite{larsen_StatisticalModelChecking_2016,agha_SurveyStatisticalModel_2018}.
That is, there exists $\varepsilon > 0$, such that 
the satisfaction probability $p_\varphi$ from \eqref{eq:app_pphi} satisfies
\begin{equation} \label{eq:app_indifference}
	p_\varphi \notin (p - \varepsilon, p + \varepsilon),
\end{equation}
where $(p - \varepsilon, p + \varepsilon) \subseteq [0, 1]$.
The interval $(p - \varepsilon, p + \varepsilon)$
is commonly referred to as the {indifference region}.
By the assumption \eqref{eq:app_indifference},
to statistically model check~\eqref{eq:app_smc},
it suffices to consider the hypothesis testing (HT) problem
\begin{equation} \label{eq:app_ht}
\begin{split}
	& H_0: p_\varphi \leq p - \varepsilon, 
	\\ & H_1: p_\varphi \geq p + \varepsilon.
\end{split}
\end{equation}
and infer whether $H_0$ or $H_1$ holds by sampling.

The HT problem \eqref{eq:app_ht} is \emph{composite},
since it contains (infinitely) many \emph{simple}
HT problems:
\begin{equation} \label{eq:app_simples}
	\begin{split}
		& H_0^{p_0}: p_\varphi = p_0, 
		\\ & H_1^{p_1}: p_\varphi = p_1.
	\end{split}
\end{equation}
where $p_0$ and $p_1$ can take values from 
$[0, p - \varepsilon]$ and 
$[p + \varepsilon, 1]$, respectively.
Intuitively, among all the possible 
values of $p_0$ and $p_1$,
the following is the most ``indistinguishable'':
\begin{equation} \label{eq:app_simp}
	\begin{split}
		& H_0^{p - \varepsilon}: p_\varphi = p - \varepsilon, 
		\\ & H_1^{p + \varepsilon}: p_\varphi = p + \varepsilon.
	\end{split}
\end{equation}
(We will discuss the meaning of ``indistinguishable'' later.)

To solve the hypothesis testing problem~\eqref{eq:app_simp},
we draw statistically independent sample paths  
$\pa_1, \pa_2, \ldots$
from the given model $\m$
with the initial state $\ms$.
For $N$ such samples, 
the log-likelihood of observing $N$
such sample paths under the two hypotheses 
$H_0^{p - \varepsilon}$ and $H_1^{p + \varepsilon}$,
are respectively $\lambda(p - \varepsilon)$ and $\lambda(p + \varepsilon)$,
where 
\begin{equation} \label{eq:app_lambda}
	\lambda_{N,T} (p) = \ln \big( p^{T} (1 - p)^{N - T} \big).	
\end{equation}
Accordingly, the log-likelihood ratio of the two hypotheses is 
\begin{equation} \label{eq:app_llr}
	\Lambda_{N,T} (p + \varepsilon, p - \varepsilon) = \lambda(p  +\varepsilon) - \lambda(p - \varepsilon).	
\end{equation}
Clearly, as $\Lambda_{N,T} (p + \varepsilon, p - \varepsilon)$ increases, 
$H_1^{p + \varepsilon}$ is more likely to be true, and 
the less statistical error is made 
when we assert $H_1^{p + \varepsilon}$ is true;
and \emph{vice versa}.

The sequential probability ratio test (SPRT)
explicitly tells us how to make these assertions from $\Lambda_{N,T}$
to achieve certain levels of statistical errors 
\cite{wald_SequentialTestsStatistical_1945}.
The statistical errors are formally given by
the probability of falsely asserting $H_1^{p + \varepsilon}$ while $H_0^{p - \varepsilon}$ holds, and the probability of falsely asserting $H_0^{p - \varepsilon}$ while $H_1^{p + \varepsilon}$ holds:
\[
\begin{split}
	& \FP = \pr \big( \text{SPRT assert } H_1^{p + \varepsilon} \mid  H_0^{p - \varepsilon} \textrm{ is true} 
\big), 
	\\ & \FN = \pr \big( \text{SPRT assert } H_0^{p - \varepsilon} \mid  H_1^{p + \varepsilon} \textrm{ is true} 
\big),
\end{split}
\]
where $\FP$ and $\FN$ are called \emph{false positive} (FP) and \emph{false negative} (FN) ratios.
When $\FP = \FN$, we may also refer to them as the \emph{significance level}.

To achieve the given desired $\FP$ and $\FN$, the SPRT is implemented \emph{sequential}
-- i.e., it continuously draws samples until the 
following condition is satisfied:
\begin{equation} \label{eq:app_sprt}
	\begin{cases}
		\text{assert } H_0^{p - \varepsilon}, & \text{ if } \Lambda_{N,T} (p + \varepsilon, p - \varepsilon) < \ln \frac{\FP}{1 - \FN} \\
		\text{assert } H_1^{p + \varepsilon}, & \text{ if } \Lambda_{N,T} (p + \varepsilon, p - \varepsilon) > \ln \frac{1 - \FP}{\FN} 
	\end{cases}
\end{equation}
It can be proved that this SPRT algorithm always terminates with probability $1$
and it strictly achieves the desired $\FP$ and $\FN$
\cite{wald_SequentialTestsStatistical_1945}.

Finally, we discuss the ``indistinguishability''.
Suppose the true satisfaction probability satisfies 
$p_\varphi \in [p + \varepsilon, 1]$.
(The case $p_\varphi \in [0, p - \varepsilon]$ is similar.)
Following \eqref{eq:app_sprt}, if the SPRT asserts $H_1^{p + \varepsilon}$
for certain $N$ and $T$,
then we have that 
\[
\Lambda_{N,T} (p_\varphi, p - \varepsilon) > \ln \frac{1 - \FP}{\FN}.
\]
This means that if $N$ and $T$ are sufficient to assert $H_1^{p + \varepsilon}$
against $H_0^{p - \varepsilon}$ with the desired $\FP$ and $\FN$, then 
they are sufficient to assert $H_1^{p_\varphi}$
against $H_0^{p - \varepsilon}$.
In other words, $p + \varepsilon$ is the worst case.
Similarly, if the SPRT asserts $H_0^{p - \varepsilon}$, 
then we can show that 
\[
\Lambda_{N,T} (p_\varphi, p - \varepsilon) < \ln \frac{\FP}{1 - \FN},
\]
and the same argument follows.
For more detailed discussions, please refer to~\cite{sen_StatisticalModelChecking_2004}.

\end{document}